
\documentclass[twocolumn,10pt]{IEEEtran}
%

\normalsize
\usepackage{bbm}
\usepackage[dvips]{graphicx}
\usepackage{adjustbox}
\usepackage{enumitem}
\usepackage{cite,algorithm,algorithmic,amsmath,amssymb,amsthm,empheq,mhsetup}
\usepackage{subfigure,amsfonts,balance}
\usepackage{epstopdf}
\usepackage{setspace}
\usepackage[dvipsnames]{xcolor}

\setlength{\abovedisplayskip}{2pt}
\setlength{\belowdisplayskip}{2pt}
\setlength{\textfloatsep}{2pt}
\setlength{\skip\footins}{10pt}
\setlength{\parskip}{2pt}



\DeclareMathOperator{\EEE}{\mathbb{E}}

\DeclareMathOperator{\f}{\pmb{f}}
\DeclareMathOperator{\aaa}{\pmb{a}}
\DeclareMathOperator{\FF}{\mathcal{F}}

\DeclareMathOperator{\K}{\mathcal{K}}

\DeclareMathOperator{\OO}{\mathcal{O}}

\DeclareMathOperator{\vv}{\pmb{v}}

\DeclareMathOperator{\z}{\pmb{z}}

\DeclareMathOperator{\HHH}{\mathcal{H}}

\DeclareMathOperator{\RRR}{\mathbb{R}}
\DeclareMathOperator{\RR}{\pmb{R}}
\DeclareMathOperator{\rrr}{\pmb{r}}

\DeclareMathOperator{\G}{\pmb{G}}

\DeclareMathOperator{\LL}{\mathcal{L}}

\DeclareMathOperator{\CN}{\mathcal{CN}}

\DeclareMathOperator{\bb}{\pmb{b}}

\DeclareMathOperator{\NN}{\mathcal{N}}

\DeclareMathOperator{\e}{\pmb{e}}

\DeclareMathOperator{\rr}{\pmb{r}}
\DeclareMathOperator{\x}{\pmb{x}}
\DeclareMathOperator{\ttt}{\pmb{t}}

\DeclareMathOperator{\y}{\pmb{y}}

\DeclareMathOperator{\Ss}{\pmb{S}}

\DeclareMathOperator{\uu}{\pmb{u}}

\DeclareMathOperator{\g}{\pmb{g}}

\DeclareMathOperator{\OOmega}{\pmb{\omega}}
\DeclareMathOperator{\ETA}{\pmb{\eta}}

\DeclareMathOperator{\ZETA}{\pmb{\zeta}}

\setcounter{page}{1}
\newtheorem{remark}{Remark}
\newtheorem{proposition}{Proposition}

\ifCLASSINFOpdf
\else
\fi



\begin{document}
	\bstctlcite{IEEEexample:BSTcontrol}
	%
	\title{\huge Energy-Efficient Massive MIMO for Federated Learning: Transmission Designs and Resource Allocations \vspace{-0mm}}
	%
	%
	\author{Tung~T.~Vu,~\IEEEmembership{Member,~IEEE}, 
		Hien~Quoc~Ngo,~\IEEEmembership{Senior~Member,~IEEE}, 
		Minh~N.~Dao, Duy~T.~Ngo,~\IEEEmembership{Senior~Member,~IEEE}, 
		Erik~G.~Larsson,~\IEEEmembership{Fellow,~IEEE},
		and Tho~Le-Ngoc,~\IEEEmembership{Life Fellow,~IEEE}
		\thanks{The work of T.~T.~Vu and H.~Q.~Ngo was supported by the U.K. Research and Innovation Future Leaders Fellowships under Grant MR/S017666/1. The work of M.~N.~Dao benefited from the FMJH Program Gaspard Monge for optimization and operations research and their interactions with data science, and was supported by a public grant as part of the Investissement d'avenir project, reference ANR-11-LABX-0056-LMH, LabEx LMH. The work of T.~T.~Vu and E.~G.~Larsson was
			supported in part by ELLIIT and in part by the KAW Foundation. The work of Tho Le-Ngoc was supported by the Natural Sciences and Engineering Research Council of Canada.}
		\thanks{T.~T.~Vu and H.~Q.~Ngo are with the Institute of Electronics, Communications, and Information Technology (ECIT), Queen's University Belfast, Belfast BT3 9DT, UK (e-mail: \{t.vu, hien.ngo\}@qub.ac.uk).}
		\thanks{M.~N.~Dao is with the School of Science, RMIT University, Melbourne, VIC 3000, Australia (e-mail: minh.dao@rmit.edu.au).}
		\thanks{D.~T.~Ngo is with the School of Engineering, The University of Newcastle, Callaghan, NSW 2308, Australia (e-mail: duy.ngo@newcastle.edu.au).}
		\thanks{T.~T.~Vu and E. G. Larsson are with the Department of Electrical Engineering (ISY), Link\"{o}ping University, SE-581 83 Link\"{o}ping, Sweden (e-mail: \{thanh.tung.vu, erik.g.larsson\}@liu.se).}
		\thanks{T. Le-Ngoc is with the Department of Electrical and Computer Engineering, McGill University, Montreal, Canada (e-mail: tho.le-ngoc@mcgill.ca)}
		\vspace{-5mm}
	}
	
		
		\maketitle
		\allowdisplaybreaks
		\begin{spacing}{1}
			\begin{abstract}
				This work proposes novel synchronous, asynchronous, and session-based designs for energy-efficient massive multiple-input multiple-output networks to support federated learning (FL). 
				The synchronous design relies on strict synchronization among users when executing each FL communication round, while the asynchronous design allows more flexibility for users to save energy by using lower computing frequencies. 
				The session-based design splits the downlink and uplink phases in each  FL communication round into separate sessions. 
				In this design, we assign users such that one of the participating users in each session finishes its transmission and does not join the next session. 
				As such, more power and degrees of freedom will be allocated to unfinished users, resulting in higher rates, lower transmission times, and hence, higher energy efficiency. In all three designs, we use zero-forcing processing for both uplink and downlink, and develop algorithms that optimize user assignment, time allocation, power, and computing frequencies to minimize the energy consumption at the base station and users, while guaranteeing a predefined maximum execution time of each FL communication round.
			\end{abstract}
		\end{spacing}
		
		\vspace{-0mm}
		\begin{IEEEkeywords}
			Asynchronous transmission, energy efficiency, federated learning, massive MIMO, session-based transmission, synchronous transmission, resource allocation.
		\end{IEEEkeywords}

		%
		\IEEEpeerreviewmaketitle
		
		\vspace{-0mm}
		\section{Introduction}
		\vspace{-0mm}
		\label{sec:Introd}
		Over the past few decades, communication systems with the Internet and mobile telephony  brought much convenience to human life \cite{ji21CSM,walid20N,yang19N}. Recently, the rapid development of artificial intelligence 
		has contributed  to the modernization of our world with a wide range of applications such as smart cities and autonomous cars \cite{viet21A,song20CIM,lahmeri21OJCS
		}. However, current communication systems are also facing big challenges. Specifically, since users (UEs) need to send their data over a shared medium, their data privacy can be compromised, as already happened  \cite{jere21SP
		}. 
		At the same time, mobile data traffic is anticipated to increase dramatically during 2020--26, at up to $32\%$ per month \cite{Ericsion21}. This in turn has led to concerns about energy consumption and carbon emissions, where communication systems are projected to contribute significantly \cite{su18}.  
		On the other hand, according to the report \cite{Ericsion18RB}, the information and communication technology sector was estimated to account for a portion of $1.4\%$ of global carbon emissions in $2015$. More importantly, this portion is likely to grow in the future when the number of internet-of-things devices grows exponentially. 
		Therefore, it is critical for future communication systems not only to be integrated with machine learning applications, but also to preserve privacy and be energy-efficient.
		
		Federated learning (FL) is a distributed learning framework that offers high  privacy  and communication efficiency 
		\cite{khan21CST,savazzi21CM,niknam20CM,liu20CC
		}. Especially, in FL, no raw data are shared during the learning process. 
		An FL process is jointly implemented by several UEs and a central server. 
		First, the central server sends a global model update to all the UEs. 
		Each UE uses this  model update, along with its private training data, to compute its own local learning model update. The UEs then send their local updates back to the central server for updating the global model update. This process is repeated until a certain level of learning accuracy is reached. Here, since the size of the model updates sent over the network is much smaller than that of the raw data, communication efficiency is much improved.

		
		\subsection{Review of Related Literature}
		In the literature, there are only several works that study energy-efficient implementations of FL over wireless networks, e.g., \cite{li21INFOCOM,kim22WCL,yue22JSAC,jin22TWC,yang21TWC,zeng21TWC,viet22TVT} and references therein. These papers can be categorized into learning-oriented and communication-oriented directions. The learning-oriented direction seeks learning solutions to reduce the energy consumed in the networks. In particular, \cite{li21INFOCOM} proposes an FL algorithm that adapts the compression parameters to minimize energy consumption at UEs. The work of \cite{kim22WCL} proposes a novel joint dataset and computation management scheme that trades off between learning accuracy and energy consumption for energy-efficient FL in mobile edge computing. Reference \cite{yue22JSAC} introduces a federated meta-learning algorithm together with a resource allocation scheme to jointly improve convergence rate and minimize energy cost. Finally, \cite{jin22TWC} develops a SignSGD-based FL algorithm where local processing and communication parameters are chosen to achieve a desired balance between learning performance and energy consumption.

		The communication-oriented direction does not propose new FL algorithms, but rather develops communication protocols and system designs to reduce the energy consumption of an  FL process run over a wireless network \cite{yang21TWC,zeng21TWC,chen21TWC,viet22TVT}. Compared to the learning-oriented direction, the communication-oriented gives more insights into how FL should be implemented at the physical layer. Specifically, \cite{yang21TWC} minimizes energy consumption at user devices by optimally allocating bandwidth, power, and computing frequency. Reference \cite{zeng21TWC} proposes another resource allocation algorithm for FL networks, in which each user is equipped with a CPU-GPU platform for heterogeneous computing. 
		The authors in \cite{chen21TWC} proposed a joint communication and learning framework that improves the learning performance while keeping the energy consumption acceptable on each user device.
		The work of \cite{viet22TVT} designs a network with unmanned aerial vehicles and wireless powered communications to provide an energy-efficient FL solution.

		\subsection{Research Gap and Main Contributions}
		The ongoing research efforts in the communication-oriented direction have mainly used frequency-division multiple  access (FDMA) to support FL. The drawback of FDMA networks is that the spectral and energy efficiencies are very low when the channel is shared by many users. It is therefore desirable to propose a novel network design to implement FL frameworks with a much higher energy efficiency.

		This research gap in the literature has motivated us to consider a massive multiple-input multiple-output (mMIMO) network to implement wireless FL in an energy-efficient manner.  The use of massive MIMO to support FL has been shown to be very efficient \cite{vu20TWC,ema22SPAWC,vu21SPAWC,wei22ICC,jeon21TWC,mu22TNSE}, compared to conventional FDMA or  time-division multiple access (TDMA) schemes. The main reasons for this are: (i) massive MIMO can simultaneously serve many users;  (ii) massive MIMO offers huge spectral efficiencies, and hence, can significantly reduce the training time; and (iii) massive MIMO provides high energy efficiency \cite{emil15TWC}. As a result, massive MIMO fits well with federated learning applications that require a large number of
		energy-efficient and low-latency transmissions between user devices and the server at the same time (e.g., a camera network of augmented reality users in the same cell building a model for object detection and classification, a vehicular network of clients equipped with various sensors building a model for image classification \cite{khan21CST}).
		
		The specific contributions of this paper are summarized as follows:
		\vspace{-0mm}
		\begin{itemize}[noitemsep,nolistsep]
			\item To support FL  over wireless networks, we propose to use mMIMO and let each FL communication round be executed within one large-scale coherence time\footnote{The large-scale coherence time is defined as the time interval during which the large-scale fading coefficients remain approximately constant.}. Owing to a high array gain and multiplexing gain, mMIMO can offer very high data rates to all UEs simultaneously in the same frequency band \cite{ngo16}. Therefore, it is expected to guarantee a stable operation during each communication round (and hence the whole FL process). 
			\item We introduce three novel transmission designs for the steps within one FL communication round. The downlink (DL) transmission, the computation at the UEs, and the uplink (UL) transmission, are implemented in a synchronous, asynchronous, or session-based manner. The synchronous design strictly synchronizes UEs in each step of one FL communication round.  The asynchronous design allows more flexibility for UEs to save energy by using lower computing frequencies. The session-based design splits the DL and UL steps into separate sessions. The UEs are then assigned such that one of the participating UEs in each session will complete its transmission and does not join subsequent sessions. This design allows more power to be allocated to the uncompleted UEs. This results in higher rates, lower transmission times, and higher energy efficiency. 
			In all three designs, both DL and UL transmissions use a dedicated pilot assignment scheme for channel estimation and zero-forcing (ZF) processing. 
			\item For each proposed transmission design, we formulate a problem of optimizing user assignment, time allocation, transmit power, and computing frequency to minimize the total energy consumption in each FL communication round, subject to a quality-of-service constraint. 
			The formulated problems are challenging due to their nonconvex and combinatorial (mixed-integer) nature. Existing solutions to problems in standard massive MIMO systems cannot be used in a straightforward manner to solve the formulated problems. As such, we propose novel algorithms that are proven to converge to stationary points, i.e., Fritz John and Karush–Kuhn–Tucker solutions, of the formulated problems.
			\item  We show by numerical results that our proposed designs significantly reduce the energy consumption per FL communication round compared to heuristic baseline schemes. The presented numerical results also confirm that the session-based design outperforms the synchronous and asynchronous designs.
		\end{itemize}
		\vspace{-0mm}
		

		It is noted that the idea of the proposed synchronous design is similar to the transmission scheme in \cite{vu20TWC}. However, the resource allocation algorithm in \cite{vu20TWC} for minimizing the FL training time in a cell-free massive MIMO network cannot be straightforwardly applied to solve the more complex problem of minimizing the energy consumption of an mMIMO network, as treated in this work. On the other hand, the proposed synchronous and asynchronous designs are different from those in \cite{vu21GC}. They use dedicated pilot assignment and ZF processing for each UE, while those in \cite{vu21GC} use co-pilot assignment and ZF processing for each group of UEs. These key distinctions result in major differences in the respective problem formulations and algorithms for resource allocation.
		
		\emph{Notation:}
		We use boldface symbols for vectors and capitalized boldface symbols for matrices.
		$\RRR^d$ denotes a space where its elements are real vectors of length $d$.
		$\pmb{X}^*$ and $\pmb{X}^H$ represent the conjugate and conjugate transpose of a matrix $\pmb{X}$, respectively.
		$\CN(\pmb{0},\pmb{Q})$ denotes the circularly symmetric complex Gaussian distribution with zero mean and covariance $\pmb{Q}$.
		$\EEE\{x\}$ denotes the expected value of a random variable $x$.

		\vspace{-0mm}
		\section{Novel Massive MIMO Designs to Support Federated Learning Networks}
		\label{sec:design}
		\vspace{-0mm}
		
		\begin{figure*}[t!]
			\centering
			\includegraphics[width=1\textwidth]{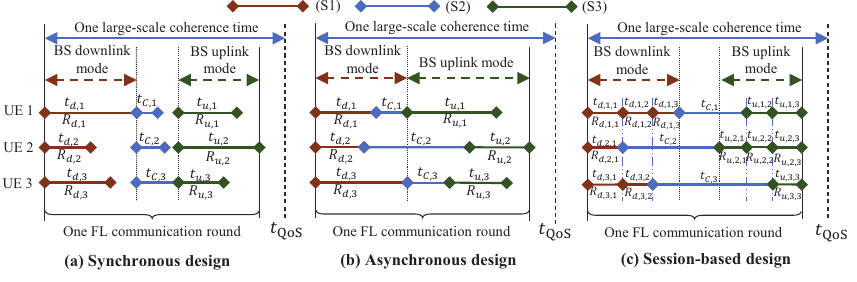}
			\vspace{-4mm}
			\caption{Illustration of one FL communication round over the considered mMIMO network with three UEs.}
			\vspace{-6mm}
			\label{fig:time1}
		\end{figure*}
		
		In this work, we focus on the optimization of communication resources in a massive MIMO wireless network that supports FL applications. Specifically, we consider the use of a \textit{standard} FL algorithm  and develop optimized transmission designs  that support this FL framework.
		We consider a network that supports FL algorithms with a synchronous aggregation mode\footnote{FL algorithms with the synchronous aggregation mode wait to receive all  local model updates sent from users before aggregation, while the FL algorithms with the asynchronous aggregation mode do not. The FL algorithms with synchronous aggregation normally outperforms the FL algorithms operating with asynchronous aggregation in terms of convergence rate and accuracy. Research on improvement of learning performance of the FL algorithms with asynchronous aggregation is still in its infancy, while FL algorithms with synchronous aggregation are well studied \cite{xu21}. Therefore, our paper focuses on transmission protocols supporting FL  with synchronous aggregation \cite{canh21TN, sai20ICML, canh20NIPS, felix20TNNLS, tran19INFOCOM, mcmahan17AISTATS}.}. 
		In general, such an FL network includes a group of UEs and a central server. Each FL communication round includes $K$ UEs and the following four basic steps \cite{canh21TN,sai20ICML,canh20NIPS,felix20TNNLS,tran19INFOCOM,mcmahan17AISTATS}:
		\begin{enumerate}[label={(S\arabic*)}]
			\item The central server sends a global update to the UEs.
			\item Each UE updates its local learning problem with the global update and its local data, and then computes a local update by solving the local problem.
			\item Each UE sends its local update to the central server.
			\item The central server recomputes the global update by aggregating the received local updates from all the UEs.
		\end{enumerate}
		The above process repeats until a certain level of learning accuracy is attained. 
		Details on the local and global updates along with their associated computations are thoroughly discussed in \cite{canh21TN,sai20ICML,canh20NIPS,felix20TNNLS,tran19INFOCOM,mcmahan17AISTATS}.  
		We assume that before our proposed schemes are undertaken, all the UEs that participate in each FL communication round have sufficient computational capabilities to update their models. This assumption is widely accepted in the literature on wireless network designs for supporting federated learning, e.g., \cite{yang21TWC,zeng21TWC,wu22TCOM,viet22TVT,pavlos22CL} and references therein.

			We note that the aggregation of  local updates of the UEs can be performed by two approaches. The first approach makes the aggregation in the digital domain \cite{yang21TWC,zeng21TWC,wu22TCOM,viet22TVT,pavlos22CL}, and is called DigComp. 
			The second approach leverages the signal superposition property to aggregate in the analog domain and is called over-the-air computation (AirComp) \cite{yang20TWC,amiri21TWC,ema22SPAWC,wei22arxiv,fan22TWC,cao22JSAC,shao22TWC}. 
			While DigComp leverages the capability of traditional digital transmission in  wireless systems that are deployed and standardized, AirComp is an emerging approach which is still under basic development and not yet supported by cellular systems \cite{henrik22}. 
			Most existing works using AirComp require the UEs to acquire CSI, which in itself is a very challenging task. Research on wireless network designs using AirComp without CSI acquisition is still in its infancy  \cite{amiri21TWC,ema22SPAWC,wei22arxiv}.   
			In this work, we follow the DigComp approach and propose energy-efficient transmission designs for  massive MIMO systems to support FL. The topic of using the AirComp approach for energy-efficient transmission designs to support FL is left for future work.

		\vspace{-0mm}
		\subsection{Proposed Transmission Designs to Support Federated Learning Networks}
		\label{transmission:designs}
		To support FL in the network, we propose to use mMIMO technology where the BS acts as the central server. Accordingly,  Steps (S1) and (S3) of each FL communication round take place over the DL and UL of the mMIMO system, respectively. 
		Each FL communication round is assumed to be executed within a channel large-scale coherence time, which is a reasonable assumption for typical network scenarios \cite{vu21SPAWC,vu21ICC,vu20TWC}. Under this assumption, we propose the following transmission schemes \footnote{UE selection could be beneficial for improving the energy efficiency of the system, especially in the case that some UEs have very bad channel conditions. However, UE selection reduces the number of UEs that participate in the FL process, and hence, would affect the FL performance (i.e., test accuracy) \cite{yang20TWC}.  Since we mainly focus on the communication aspects in a standard FL framework, we do not incorporate the UE selection process into our proposed \textit{transmission designs}, but assume that all $K$ UEs participate in each FL communication round. This assumption is made in much of the literature on  wireless network design for support of federated learning, e.g., \cite{yang21TWC,zeng21TWC,wu22TCOM,viet22TVT,pavlos22CL}. More importantly, although we do not take into account the UE selection part in the transmission designs, our proposed transmission schemes can still be used to support FL frameworks that have UE selection in their FL algorithms. Specifically, in each communication round of such FL algorithms, different values of $K$ and different UEs can be selected from a larger pool of UEs using the UE selection scheme in the FL algorithm. Then, our optimization problems can be reformulated for the given new $K$ UEs without any changes in their mathematical structure.}
		to support Steps (S1)--(S3) of each FL communication round:
		\begin{itemize}
			\item[1)] \textbf{Synchronous Design}: As shown in Fig.~\ref{fig:time1}(a), the synchronous design requires a  certain degree of synchronization among the UEs when executing the steps of one FL communication round. In particular, the UEs are synchronized for steps (S2) and (S3) to start simultaneously at all UEs. The UEs' rates are taken to be the achievable rates when all the UEs' transmissions are being active. 
			\item[2)] \textbf{Asynchronous Design}: Compared with the synchronous design, the asynchronous design uses the same rate assignment scheme. The DL (UL) rate of each user is kept fixed for the whole DL (UL) mode. However, the asynchronous design has a different transmission protocol. The asynchronous design only requires the UEs to start Step (S1) simultaneously. As shown in Fig.~\ref{fig:time1}(b), UEs have more flexibility in executing Steps (S1)--(S3). This is because they can transmit their local model updates in Step (S3) immediately after they complete Step (S2), as long as their UL transmission is performed during the BS UL mode. Thus, the UEs in the asynchronous design need not wait for other UEs, as is the case of the synchronous design. Instead, they can use the waiting time to compute their local model updates with a lower clock frequency to save energy. Also, thanks to the flexible synchronization requirement among the UEs, the asynchronous design has a significantly lower signalling overhead compared to the synchronous design, especially when the number of UEs is large.  
			\item[3)] \textbf{Session-based Design}: In the asynchronous design, the DL (UL) rate of each user is kept fixed for the whole DL (UL) duration. This is not efficient because, for each mode, after some time, some users may complete their transmissions. Hence, other users can increase their rates owing to the reduced level of interference and increased availability of power (on DL).
			Based on this observation, we propose the session-based design in Fig.~\ref{fig:time1}(c). Here, instead of using one single session for each step (S1) or (S3), we use multiple sessions to serve UEs in steps (S1) and (S3). After each session, one user completes its transmission, and the rates of other users are adapted accordingly. Since there are fewer UEs competing for power in each session, more power can be allocated to the UEs that have not yet completed their transmissions. In addition, the inter-user interference reduces, which leads to higher rates, faster transmission and better energy efficiency compared to the other designs. 
		\end{itemize}

			

		\vspace{-0mm}
		\section{System Models}
		\vspace{-0mm}
		\label{sec:SystModel}
		This section provides detailed system models for the proposed designs. 
		As discussed in Section~\ref{sec:design}, because the synchronous and asynchronous designs use the same rate assignment scheme, their system models are similar. On the other hand, as can be seen from Fig.~\ref{fig:time1}, the asynchronous design is a special case of the session-based design with a single session
		. Based on these observations, the system model of the session-based design is therefore provided as a general model, followed by the specific models for the asynchronous and synchronous designs. 
		
		In the considered mMIMO model, a BS equipped with $M$ antennas serves $K$ UEs each equipped with a single antenna at the same time and in the same frequency bands, using time-division duplexing.
		The channel vector from a UE $k$ to the BS is denoted by $\g_{k} \!=\! (\beta_{k})^{1/2}\tilde{\g}_{k},$
		where $\beta_{k}$ and $\tilde{\g}_{k} \sim \CN(\pmb{0},\pmb{I}_M)$ 
		are the corresponding large-scale fading coefficient and small-scale fading coefficient vector, respectively.
		In this work, we consider low mobility scenarios with a large coherence interval $\tau_c$.
		Each FL communication round is executed in one large-scale coherence time \cite{vu20TWC} (see Fig.~\ref{fig:FLround})
		The DL transmission for the global update in Step (S1) and the UL transmission for the local update in Step (S3) span multiple (small-scale) coherence times. 
		
		\vspace{-0mm}
		\subsubsection{Step (S1)} 
		The BS sends the parameter vector to all the UEs in $K$ sessions. Each coherence block of this step involves two phases: UL channel estimation and DL payload data transmission.
		Define an indicator $a_{k,i}$ as 
		\begin{align}\label{a}
			a_{k,i} \triangleq
			\begin{cases}
				1,& \text{if the BS serves a UE $k$ in a session $i$,}\\
				0, & \mbox{otherwise}.
			\end{cases}
		\end{align}
		Let $\K_i\triangleq\{k|a_{k,i} = 1\}$ be the set of $K_i=\sum_{k\in\K}a_{k,i}$ UEs participating in a session $i\in\K$. We have
		\begin{align}
			\label{sumaki}
			& a_{k,1} = 1, \sum_{k\in\K} a_{k,i} = K-i+1, a_{k,i} \leq a_{k,i-1},  \forall k,i
		\end{align}
		to make sure that all the UEs are served in session $1$. Also, in each of the subsequent sessions, one UE is instructed to finish its transmission such that it does not join the next sessions.
		Doing this helps the UEs who are yet to finish their transmissions in that they get assigned more power and experience a lower level of inter-user interference, which translates into higher data rates.
		
		Here, the asynchronous and synchronous designs are considered as the same special case 
		when all the UEs are served in a single session,
		$i=1$, and $a_{k,1}=1, \forall k$.

		\textbf{UL channel estimation}:
		For each coherence block of length $\tau_c$, each UE sends its dedicated pilot of length $\tau_{d,p}$ to the BS. 
		We assume that the pilots of all UEs are pairwisely orthogonal,
		which requires $\tau_{d,p}\geq K$.\footnote{It is possible to let only the participating UEs send their pilots in session $i$ and choose $\tau_{d,p} = K_i$ to increase the remaining interval for payload data transmission. However, since $\tau_c$ is normally much larger than $K\geq K_i$  \cite{vu20TWC,ngo17TWC}, letting all UEs send their pilots, i.e., choosing $\tau_{d,p} = K$, has a negligible impact on data rates. In addition, using a pilot length $\tau_{d,p} = K>K_i$ makes the channel estimation better than with $\tau_{d,p} = K_i$, and hence, can potentially improve the data rates.} 
		At the BS, the channel $\g_{k}$ between a UE $k$ and the BS is estimated by using the received pilots and  minimum mean-square error (MMSE) estimation. The MMSE estimate $\hat{\g}_{k}$ of $\g_{k}$ is
		distributed as $\CN(\pmb{0},\hat{\sigma}_{k}^2\pmb{I}_M)$, where $\hat{\sigma}_k^2 = \frac{\tau_{d,p} \rho_{p} \beta_k^2 }{ \tau_{d,p} \rho_{p} \beta_k +1 }$ and $\rho_{p}$ is the normalized transmit power of each pilot symbol \cite[(3.8)]{ngo16}.
		We also denote by $\hat{\G}_i\triangleq [\dots,\hat{\g}_{k},\dots], \forall k\in\K_i$, the matrix obtained by stacking the channels of all participated UEs in a session $i$.

		\textbf{DL payload data transmission}: We assume that the BS uses a unicast scheme and ZF precoding
		to transmit the global training update to the $K$ UEs. Let $s_{d,k,i}$, where $\EEE\{|s_{d,k,i}|^2\}=1$, be the data symbol intended for a UE $k$ in a session $i$. With ZF, the signal transmitted by the BS in the session $i$ is 
		$\vv_{d,i}\!=\! \sqrt{\rho_{d}}\sum_{\ell\in \K_i}\sqrt{\eta_{k,i}}\uu_{\ell,i} s_{d,\ell,i},$
		where $\uu_{k,i}\! =\! \sqrt{\hat{\sigma}_{k}^2(M\!-\!K_i)} \hat{\G}_i(\hat{\G}_i^H\hat{\G}_i)^{-1}\e_{k,K_i}$ is the ZF precoding vector,
		$\eta_{k,i}$ is a power control coefficient, $\e_{k,K_i}$ is the $k$-th column of $\pmb{I}_{K_i}$, and $\rho_{d}$ is the maximum normalized transmit power at the BS. Note that ZF  requires $M \geq K_i$.
		The transmitted power at the BS must meet the average normalized power constraint  $\EEE\{|\vv_{d,i}|^2\}\leq \rho_d$, which can also be expressed as 
		\begin{align}
			\label{powerdupperbound}
			\sum_{k\in\K_i}\eta_{k,i} \leq 1, \forall i\in\K.
		\end{align}
		Here, we have 
		\begin{align}
			\label{eta-a-relation} 
			(\eta_{k,i} = 0, \text{if $a_{k,i} = 0$}), \forall k,i
		\end{align}
		to ensure that no power is allocated to the UEs that are not served in the session $i$. 
		The achievable rate of the UE $k$ in the session $i$ is given by $R_{d,k,i}(\ETA_i) = \frac{\tau_c - \tau_{d,p}}{\tau_c}B\log_2 \big( 1 + \text{SINR}_{d,k,i}(\ETA_i)\big)$, 
		where $B$ is the transmission bandwidth, $\ETA_i\triangleq\{\eta_{k,i}\}_{k\in\K}$, and
			$\text{SINR}_{d,k,i}(\ETA_i) =
			\frac{(M-K_i)\rho_d \hat{\sigma}_k^2\eta_{k,i}}
			{\rho_d (\beta_k - \hat{\sigma}_k^2) \sum_{\ell\in\K_i} \eta_{\ell,i} +1} 
			\overset{\eqref{eta-a-relation}}{=} \frac{(M-K_i)\rho_d \hat{\sigma}_k^2\eta_{k,i}}
			{\rho_d (\beta_k - \hat{\sigma}_k^2) \sum_{\ell\in\K} \eta_{\ell,i} +1}$
		is the effective DL signal-to-interference-plus-noise ratio (SINR) \cite[(3.56)]{ngo16}.
		
		Similarly, the power constraint at the BS and the achievable rate at the UE $k$ in the asynchronous and synchronous designs are given as
		\begin{align}
			\label{powerdupperbound:asyn:syn}
			&\sum_{k\in\K}\eta_{k} \leq 1
			\\
			\label{Rd}
			&R_{d,k}(\ETA) = \frac{\tau_c - \tau_{d,p}}{\tau_c}B\log_2 \big( 1 + \text{SINR}_{d,k}(\ETA)\big),
		\end{align}
		where $\ETA\triangleq\{\eta_{k}\}_{k\in\K}$ are the power control  coefficients, and $\text{SINR}_{d,k}(\ETA) =
		\frac{(M-K)\rho_d \hat{\sigma}_k^2\eta_{k}}
		{\rho_d (\beta_k - \hat{\sigma}_k^2) \sum_{\ell\in\K} \eta_{\ell} +1}$. 
		
		\begin{figure}[t!]
			\centering
			\includegraphics[width=0.5\textwidth]{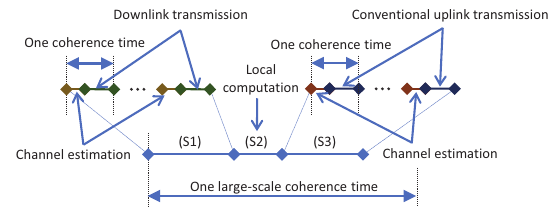}
			\vspace{-0mm}
			\caption{{Operation of one FL communication round in the considered massive MIMO network.}}
			\label{fig:FLround}
		\end{figure}
		
		\vspace{-0mm}
		\begin{remark}
			The same global training update can be coded differently for different UEs to improve the spectral efficiency of the DL transmission. 
			Specifically, it can be transmitted by either a multicast scheme or a unicast scheme \cite{sadeghi18TWC}. As shown in Fig.~5 of \cite{sadeghi18TWC}, in a massive MIMO system where the same message is sent to all users, the scheme using unicast and ZF is recommended in almost all cases, except when the coherence interval is short (small $\tau_c$) or  the number of antennas $M$ at the BS is small. In our paper, we consider low mobility scenarios (i.e., large $\tau_c$) with a large value of $M$.
			Therefore, we  choose unicast and ZF precoding for our transmission scheme. We verify the advantage of this choice over the multicast scheme by Fig.~\ref{Fig:UCvsMC}, which compares the unicast scheme and the multicast schemes in a single group of UEs. From the figure, in terms of per-UE rates, the unicast with dedicated pilots significantly outperforms the multicast counterparts in both dedicated and co-pilot pilot designs. 
			We also note that the difference in the global training update for each user is the difference of the symbols that encode the same global training update for different users. There is no change in the FL model of the standard FL framework discussed in Section II.~A. On the other hand, ZF precoding, while simple, performs very closely to the optimal precoding in massive MIMO \cite{ngo16,emil17}. That is why ZF precoding is employed in this paper, to achieve both simplicity and good performance.
		\end{remark}
		
		\begin{figure}[t!]
			\centering
			{\includegraphics[width=0.45\textwidth]{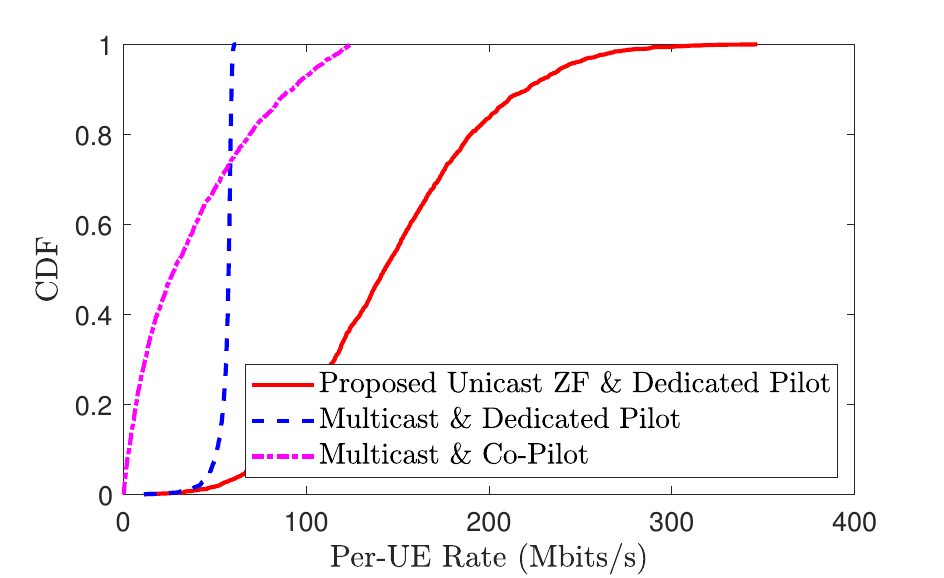}}
			\vspace{-0mm}
			\caption{Comparison of downlink per-UE rates in a single group of UEs between the unicast ZF scheme with a dedicated pilot design and the multicast schemes with dedicated and co-pilot pilot designs. Here, $M=75, K=10, \eta_k = 1/K, \forall k$. All other parameters are the same as  in our simulation results (see Section~\ref{sec:sim:parameter}).}
			\label{Fig:UCvsMC}
		\end{figure}

		\vspace{-0mm}
		\textbf{DL delay}:
		Let $S_{d}$ and $S_{d,k,i}$ be the size of the global model update and the size of the split data of the update intended for a UE $k$ in a session $i$, respectively.
		Then, we have
		\begin{align}
			\label{samesizedatad}
			& \sum_{i\in\K} S_{d,k,i} = S_d, \forall k.
		\end{align}
		Let $t_{d,i}$ be the length (in second)
		of the session $i$. Then from Fig.~\ref{fig:time1}(c), the transmission time $t_{d,k,i}$ to the UE $k\in\K$ in the session $i$ of the session-based design is  given by
		\begin{align}\label{tdki}
			t_{d,k,i}(a_{k,i},t_{d,i}) \!=\! a_{k,i} t_{d,i}, \forall k,i.
		\end{align}
		Here, 
		\begin{align}
			\label{sametime:sessisionS1}
			\nonumber
			S_{d,k,i} &= R_{d,k,i}(\ETA_i) t_{d,k,i} \overset{\eqref{tdki}}{=} R_{d,k,i}(\ETA_i) a_{k,i} t_{d,i} \\
			&\overset{\eqref{eta-a-relation} }{=} R_{d,k,i}(\ETA_i) t_{d,i}, \forall k,i.
		\end{align}
		Clearly, \eqref{sametime:sessisionS1} also implies that $(S_{d,k,i} = 0, \text{if $a_{k,i} = 0$}), \forall k,i$, which ensures that no data is sent to the UEs not served in session $i$.
		The transmission time to UE $k\in\K$ in the asynchronous and synchronous designs is expressed as $t_{d,k}(\ETA) \!=\! \frac{S_{d}}{{R_{d,k}(\ETA)}}, \forall k$.
		
		\textbf{Energy consumption for the DL transmission}:
		Denote by $N_0$ the noise power. The energy consumption for transmitting the global update or its split data to a UE $k$ is the product of the transmit power $\rho_d N_0 \eta_{k}$ or $\rho_d N_0 \eta_{k,i}$ and the transmission time to the UE $k$. Therefore, the total energy consumption for transmission by the BS in a session $i$ of the session-based design is $E_{d,i}(\ETA_i, \aaa_i, \ttt_{d}) 
		=\sum_{k\in\K_i} \rho_d N_0 \eta_{k,i}  t_{d,k,i}(a_{k,i},t_{d,i}) = \sum_{k\in\K_i} \rho_d N_0 \eta_{k,i}  a_{k,i} t_{d,i}, \forall k,i,$
		and that in the asynchronous and synchronous designs is $E_{d}(\ETA) 
		=\sum_{k\in\K} \rho_d N_0 \eta_{k}
		\frac{S_{d}}{R_{d,k}(\ETA)}$, 
		where $\aaa_i \triangleq \{a_{k,i}\}_{k\in\K}, \ttt_d \triangleq \{t_{d,i}\}_{i\in\K}$.
		
		\subsubsection{Step (S2)} 
		\vspace{-0mm}
		After receiving the global update, each UE uses its local data set to execute $L$ local computing rounds in order to compute its local update. The model of this step is used in all the proposed designs. 
		
		\textbf{Local computation}:
		Let $c_k$ (cycles/sample) be the number of processing cycles for a UE $k$ to process one data sample \cite{tran19INFOCOM}. Denote by $D_{k}$ (samples) and $f_k$ (cycles/s) the size of the local data set and the processing frequency of the UE $k$, respectively. The computation time at the UE $k$ is then given by $t_{C,k}(f_k) = \frac{LD_kc_k}{f_k}$ \cite{vu20TWC,tran19INFOCOM}.
		
		\textbf{Energy consumption for local computing at the UEs}:
		The energy consumed by the UE $k$ to compute its local training update is given as $E_{C,k}(f_k) = L\frac{\alpha}{2}c_k D_k f_k^2, \forall k$, 
		where $\frac{\alpha}{2}$ is the effective capacitance coefficient of the UEs' computing chipset \cite{tran19INFOCOM,vu20TWC}.
		
		\subsubsection{Step (S3)}
		\vspace{-0mm}
		In this step, the local model updates are transmitted from the UEs to the BS through $K$ sessions. 
		Each coherence block of this step involves two phases: channel estimation and uplink payload data transmission.
		Define the indicator $b_{k,j}$ as 
		\begin{align}\label{b}
			b_{k,j} \triangleq
			\begin{cases}
				1,& \text{if UE $k$ send its data in a session $j$,}\\
				0, & \mbox{otherwise}.
			\end{cases}
		\end{align}
		Let $\NN_j\triangleq\{k|b_{k,j} = 1\}$ be the set of $N_j=\sum_{k\in\K}b_{k,j}
		$ participating UEs in a session $j\in\K$. Here, we have
		\begin{align}
			\label{sumbkj}
			& b_{k,K} = 1, \sum_{k\in\K} b_{k,j} = j, b_{k,j-1} \leq b_{k,j}, \forall k, j
		\end{align}
		to guarantee that all the UEs finish their transmissions in the last session $K$ and each session has one more UE sending its data.
		Doing this helps the UEs that start their transmissions earlier. They can have more power which yields  higher achievable rates, lower delays, and thus, potentially lower transmission energy in each FL communication round. 
		Note that in the asynchronous and synchronous designs, there is only one session $j=K$, and hence, $K_j=K$, $\K_j=\K$, and $\{b_{k,j}\}$ are not variables but constants, i.e., $b_{k,K}=b_k=1, \forall k$. 
		
		\textbf{Uplink channel estimation}:
		In each coherence block, each UE sends its pilot of length $\tau_{u,p}$  to the BS. We assume that the pilots of all the UEs are pairwisely orthogonal, which requires the pilot  lengths to satisfy
		$\tau_{u,p} \geq N_j$.
		The MMSE estimate $\bar{\g}_k$ of $\g_k$ is distributed according to $\CN(\pmb{0},\bar{\sigma}_k^2\pmb{I}_M)$, where
		$\bar{\sigma}_k^2 = \frac{\tau_{u,p} \rho_{p} \beta_k^2}{\tau_{u,p} \rho_{p}\beta_k+1}$ \cite[(3.8)]{ngo16}. 
		
		
		\textbf{UL payload data transmission}:
		After computing the local update, a UE $k$ encodes this update into symbols denoted by $s_{u,k,j}$, where $\EEE\{|s_{u,k,j}|^2\}=1$, and sends the baseband signal $x_{u,k,j}\!=\!\sqrt{\rho_{u}\zeta_{k,j}} s_{u,k,j}$ to the BS, where $\rho_{u}$ is the maximum normalized transmit power at each UE and $\zeta_{k,j}$ is a power control coefficient.
		This signal is subject to the average transmit power constraint,
		$\EEE\left\{|x_{u,k,j}|^2\right\}\leq \rho_u$, which can be expressed as
		\begin{align}\label{poweruupperbound}
			\zeta_{k,j} \leq 1, \forall k\in \NN_j.
		\end{align}
		Here, we have 
		\begin{align}
			\label{zeta-b-relation} 
			(\zeta_{k,j} = 0, \text{if $b_{k,j} = 0$}), \forall k,j
		\end{align}
		to ensure that the UEs not sending  data in session $j$ are not allocated power. 
		After receiving data from all UEs, the BS uses the estimated channels and ZF combining to detect the UEs' message symbols. The ZF receiver requires $M \geq N_j$. The achievable rate
		(bps) of UE $k$ is given by $R_{u,k,j}(\ZETA_j)
		= \frac{\tau_c-\tau_{u,p}}{\tau_c}B \log_2 \big( 1 + \text{SINR}_{u,k,j}(\ZETA_j)\big)$,
		where $\ZETA_j\triangleq\{\zeta_{k,j}\}_{k\in\K}$, $\tau_{u,p}=K$, and
			$\text{SINR}_{u,k,j} (\ZETA_j) \triangleq
			\frac{(M-N_j) \rho_u \bar{\sigma}_{k}^2 \zeta_{k,j}} {\rho_u \sum_{\ell \in \NN_{j}}  (\beta_{\ell} - \bar{\sigma}_{\ell}^2) \zeta_{\ell,j} + 1} \overset{\eqref{zeta-b-relation}}{=} \frac{(M-N_j) \rho_u \bar{\sigma}_{k}^2 \zeta_{k,j}} {\rho_u \sum_{\ell \in \K}  (\beta_{\ell} - \bar{\sigma}_{\ell}^2) \zeta_{\ell,j} + 1}$
		is the effective uplink SINR \cite[(3.29)]{ngo16}.
		
		Similarly, the power constraint at the UEs and the achievable rate of the UE $k$ in the asynchronous and synchronous designs are given by
		\begin{align}
			\label{poweruupperbound:asyn:syn}
			&\zeta_{k} \leq 1, \forall k\in \K
			\\
			\label{Ru}
			&R_{u,k}(\ZETA)
			= \frac{\tau_c-\tau_{u,p}}{\tau_c}B \log_2 \big( 1 + \text{SINR}_{u,k}(\ZETA)\big),
		\end{align}
		where $\ZETA\triangleq\{\zeta_{k}\}_{k\in\K}$ are power control coefficients, and $\text{SINR}_{u,k} (\ZETA) \triangleq
		\frac{(M-K) \rho_u \bar{\sigma}_{k}^2 \zeta_{k}} {\rho_u \sum_{\ell \in \K}  (\beta_{\ell} - \bar{\sigma}_{\ell}^2) \zeta_{\ell} + 1}$.
		
		\textbf{UL delay}:
		Let $S_{u}$ and $S_{u,k,j}$ be the size of the local model update and the size of the split data of this update in a session $j$, respectively. Then, we have
		\begin{align}
			\label{samesizedatau}
			&\sum_{j\in\K} S_{u,k,j} = S_u, \forall k.
		\end{align}
		Since the transmission time $t_{u,j}$ from every participating UE to the BS in the session $j$ is the same, the transmission time $t_{u,k,j}$ from a UE $k\in\K$ in the session-based design is given by
		\begin{align}\label{tukj}
			\!\!\! t_{u,k,j}(b_{k,j},t_{u,j}) = b_{k,j} t_{u,j}, \forall k,j,
		\end{align}
		and 
		\begin{align}
			\label{sametime:sessisionS3}
			\nonumber
			S_{u,k,j} &= R_{u,k,j}(\ZETA_j) t_{u,k,j} \overset{\eqref{tukj}}{=} R_{u,k,j}(\ZETA_j) b_{k,j}  t_{u,j} 
			\\
			&\overset{\eqref{zeta-b-relation}}{=} R_{u,k,j}(\ZETA_j) t_{u,j}, \forall k,j.
		\end{align}
		Here, \eqref{sametime:sessisionS3} also implies $(S_{u,k,j} = 0, \text{if $b_{k,j} = 0$}), \forall k,j$, which ensures that the UEs not participating in session the $j$ do not send any data. 
		The transmission time from the UE $k\in\K$ in the asynchronous and synchronous designs is $t_{u,k}(\ZETA) =  \frac{S_{u}}{R_{u,k}(\ZETA)}, \forall k,j$.
		
		\textbf{Energy consumption for the UL transmission}:
		The energy consumption for the UL transmission at a UE is the product of the UL power and the transmission time. In particular, the energy consumption at a UE $k$ in a session $j$ of the session-based design is given by $E_{u,k,j}(\zeta_{k,j},b_{k,j},t_{u,j}) = \rho_u N_0 \zeta_{k,j} t_{u,k,j}(b_{k,j},t_{u,j}) = \rho_u N_0 \zeta_{k,j} b_{k,j} t_{u,j}, \forall k,j$,
		and that of both the asynchronous and synchronous designs is expressed as $E_{u,k}(\ZETA)= \frac{\rho_u N_0\zeta_{k} S_{u}} {R_{u,k}(\ZETA)}, \forall k$.
		
		\subsubsection{Step (S4)}
		In this step, the BS recomputes the global update using all the received local updates. 
		This step is executed at the BS and does not affect our transmission designs. The computational capability of the central server (i.e., the BS) is much higher than that of each UE, and Step (S4) typically entails the application of a simple aggregation rule such as summing up the model updates. Therefore, the time required for computing the global update in Step (S4) is assumed negligible. 
		Consequently, the computation time of Step (S4) is ignored in the problem formulation and solution in the subsequent sections.

		\vspace{-0mm}
		\section{Session-based Scheme: Problem Formulation and Solution}
		\label{sec:PF}
		\vspace{-0mm}
		\subsection{Problem Formulation}
		\vspace{-0mm}
		In this work, we aim to (i) improve the energy efficiency of the proposed FL-enabled mMIMO networks by minimizing the total energy consumption in one FL communication round, and (ii) guarantee the execution time
		of each round below a quality-of-service threshold. Here, the total energy consumption of one FL communication round includes the energy consumption for transmission and local computation at both the BS and the UEs. Thus, the total energy consumption of one FL communication round in the session-based design is
			$E_{SB}(\aaa,\bb,\widehat{\ETA}, \widehat{\ZETA}, \f, \ttt_d, \ttt_u) \triangleq 
			\sum_{i\in\K} E_{d,i}(\ETA_i,\aaa_i,\ttt_{d}) + \sum_{k\in\K} \big( E_{C,k}(f_k) + \sum_{j\in\K} E_{u,k,j}(\zeta_{k,j},b_{k,j},t_{u,j}) \big)
			= 
			\sum_{k\in\K}\sum_{i\in\K} \rho_d N_0 \eta_{k,i} a_{k,i} t_{d,i}
			+ \sum_{k\in\K} \!\big( L\frac{\alpha}{2} c_{k}D_kf_{k}^2 + \!\sum_{j\in\K}\! \rho_u N_0\zeta_{k,j} b_{k,j} t_{u,j} \big),$
		with $\aaa\! \triangleq \{a_{k,i}\}, \bb\! \triangleq \{b_{k,j}\}, \widehat{\ETA} \triangleq \{\ETA_{i}\}, \widehat{\ZETA} \triangleq \{\ZETA_{j}\}, \f \triangleq \{f_k\}, \ttt_u \triangleq \{t_{u,j}\}, \forall k,i,j$.

		\vspace{-0mm}
		The problem of optimizing user assignment $(\aaa,\bb)$, data size $(\Ss_d,\Ss_u)$, time allocation $(\ttt_d, \ttt_u)$, power $(\widehat{\ETA},\widehat{\ZETA})$, and computing frequency $\f$, to minimize the total energy consumption of one FL communication round in the session-based design is formulated as
		\begin{subequations}\label{Pmain:sb}
			\begin{align}
				\label{CFsb}
				\!\!\!\!\!\underset{\x}{\min} \,\,
				& E_{SB} (\aaa,\bb,\widehat{\ETA},\widehat{\ZETA}, \f, \ttt_d, \ttt_u)
				\\
				\!\!\!\!\!\mathrm{s.t.}\,\,
				\nonumber
				& \eqref{a}-
				\eqref{eta-a-relation},  
				\eqref{samesizedatad}, 
				\eqref{sametime:sessisionS1}, 
				\eqref{b}-
				\eqref{zeta-b-relation}, \eqref{samesizedatau}, 
				\eqref{sametime:sessisionS3}
				\\
				\label{powerlowerbound}
				& 0\leq \eta_{k,i}, 0\leq \zeta_{k,j}, \forall k,i,j
				\\
				\label{fbound}
				& 0 \leq f_{k} \leq f_{\max}, \forall k
				\\
				\label{QoSbound-1}
				& \sum_{i\in\K}t_{d,k,i}
				\!+\! t_{C,k}
				+ \sum_{j\in\K}t_{u,k,j}
				= t, \forall k
				\\
				\label{QoSbound-2}
				&t \leq t_{\text{QoS}}
				\\
				\label{syncbound}
				& \max_{k\in\K} \sum_{i\in\K}t_{d,k,i}
				\!\leq\! \min_{k\in\K} \big(\sum_{i\in\K}t_{d,k,i}
				\!+\! t_{C,k} 
				\big),
			\end{align}
		\end{subequations}
		where $\x \triangleq \{\aaa, \bb, \widehat{\ETA}, \widehat{\ZETA}, \f, \Ss_d, \Ss_u, \ttt_d, \ttt_u\}, \Ss_d\! \triangleq \{S_{d,k,i}\}, \Ss_u\! \triangleq \{S_{u,k,j}\}, \forall k,i,j$.
		Here, \eqref{syncbound} is introduced to ensure that all the UEs send their local updates during the UL mode of the BS. The right-hand side of \eqref{syncbound} corresponds to the first UE that finishes its DL transmission and local computation, while the left-hand side corresponds to the slowest UE that finishes its DL transmission. Constraints \eqref{QoSbound-1} and \eqref{QoSbound-2} take into account the time consumption in each FL communication round. These constraints make sure that the time consumption of each FL communication round does not exceed the minimum requirement $t_{\text{QoS}}$, in order to ensure a target level of quality of service. Note that the study of the optimal trade-off between the time and energy consumption such as \cite{luo20TWC} is interesting but beyond the scope of our paper, and hence, is left for future work.
		
		\vspace{-0mm}
		\begin{remark}
			\label{remark:nolearning}
			Similar to many existing works that follow the DigComp approach   (such as \cite{yang21TWC,zeng21TWC,wu22TCOM,viet22TVT,pavlos22CL}), our intention is to design energy-efficient wireless networks to support standard FL. 
			Also, we do not combine massive MIMO and FL to create a new learning framework. 
			We focus on the communication aspects, and more specifically the  schemes for users to receive, compute, and transmit their model updates. On one hand, our proposed schemes do not require any changes to or even assumptions on  the learning algorithm. As such,  the learning performance (including convergence rates) of any standard FL framework (e.g., those in \cite{canh21TN,sai20ICML,canh20NIPS,felix20TNNLS,tran19INFOCOM,mcmahan17AISTATS}) implemented over massive MIMO systems using our proposed schemes remains unchanged. The complexity of the existing FL algorithm to be implemented on the proposed massive MIMO networks does not increase. On the other hand, transmitting and receiving FL model updates is nothing but transmitting and receiving data between user devices and base station. Therefore, the complexity of a massive MIMO network used to support FL is similar to that of a current 5G massive MIMO network with the same system configuration. 
		\end{remark}

		
		\vspace{-0mm}
		\subsection{Solution}
		\vspace{-0mm}
		\label{sec:alg:sb}
		Finding a globally optimal solution to problem \eqref{Pmain:sb} is challenging due to the mixed-integer and nonconvex constraints \eqref{a}, \eqref{eta-a-relation},  \eqref{sametime:sessisionS1}, \eqref{b}, \eqref{zeta-b-relation}, \eqref{sametime:sessisionS3}, \eqref{QoSbound-1}, and \eqref{syncbound}. Therefore, we instead propose a solution approach that is suitable for practical implementation. 
		
		\subsubsection{Problem Transformation}
		First, we aim to transform the problem \eqref{Pmain:sb} into a more tractable form. Specifically, 
		we replace constraints \eqref{eta-a-relation} and \eqref{zeta-b-relation} by
		\begin{align}
			\label{power-a-b-relation}
			\eta_{k,i} \leq a_{k,i}, \zeta_{k,i}\leq b_{k,j},\forall k,i,j.
		\end{align}
		Then, constraints \eqref{sametime:sessisionS1} and \eqref{sametime:sessisionS3} are replaced by
		\begin{align}
			\label{sametime:sessisionS1-1}
			& S_{d,k,i} \leq R_{d,k,i}(\ETA_i) a_{k,i} t_{d,i}, \forall k,i
			\\
			\label{sametime:sessisionS1-2}
			& S_{d,k,i} \geq R_{d,k,i}(\ETA_i) a_{k,i} t_{d,i}, \forall k,i
			\\
			\label{sametime:sessisionS3-1}
			& S_{u,k,j} \leq R_{u,k,j}(\ZETA_j) b_{k,j} t_{u,j}, \forall k,j
			\\
			\label{sametime:sessisionS3-2}
			& S_{u,k,j} \geq R_{u,k,j}(\ZETA_j) b_{k,j} t_{u,j}, \forall k,j.
		\end{align}
		Constraints \eqref{sametime:sessisionS1-1}--\eqref{sametime:sessisionS3-2} are further replaced by
		\begin{align}
			\label{ratedki-lowerbound-1}
			& \hat{r}_{d,k,i} \leq R_{d,k,i}(\ETA_i), \forall k,i
			\\
			\label{ratedki-upperbound-1}
			& \tilde{r}_{d,k,i} \geq R_{d,k,i}(\ETA_i), \forall k,i
			\\
			\label{tdki-lowerbound}
			& \hat{t}_{d,k,i} \leq a_{k,i} t_{d,i}, \forall k,i
			\\
			\label{tdki-upperbound}
			& \tilde{t}_{d,k,i} \geq a_{k,i} t_{d,i}, \forall k,i
			\\
			\label{sametime:sessisionS1-1-a}
			& S_{d,k,i} \leq \hat{r}_{d,k,i} \hat{t}_{d,k,i}, \forall k,i
			\\
			\label{sametime:sessisionS1-2-a}
			& S_{d,k,i} \geq \tilde{r}_{d,k,i} \tilde{t}_{d,k,i}, \forall k,i
			\\
			\label{rateukj-lowerbound-1}
			& \hat{r}_{u,k,j} \leq R_{u,k,j}(\ZETA_j), \forall k,j
			\\
			\label{rateukj-upperbound-1}
			& \tilde{r}_{u,k,j} \geq R_{u,k,j}(\ZETA_j), \forall k,j
			\\
			\label{tukj-lowerbound}
			& \hat{t}_{u,k,j} \leq b_{k,j} t_{u,j}, \forall k,j
			\\
			\label{tukj-upperbound}
			& \tilde{t}_{u,k,j} \geq b_{k,j} t_{u,i}, \forall k,j
			\\
			\label{sametime:sessisionS3-1-a}
			& S_{u,k,j} \leq \hat{r}_{u,k,j} \hat{t}_{u,k,j}, \forall k,j
			\\
			\label{sametime:sessisionS3-2-a}
			& S_{u,k,j} \geq \tilde{r}_{u,k,j} \tilde{t}_{u,k,j}, \forall k,j,
		\end{align}
		where $\hat{\rr}_d \triangleq \{\hat{r}_{d,k,i}\}, \tilde{\rr}_d \triangleq \{\tilde{r}_{d,k,i}\}, \hat{\ttt}_d \triangleq \{\hat{t}_{d,k,i}\}, \tilde{\ttt}_d \triangleq \{\tilde{t}_{d,k,i}\}, \hat{\rr}_u \triangleq \{\hat{r}_{u,k,j}\}, \tilde{\rr}_u \triangleq \{\tilde{r}_{u,k,j}\}, \hat{\ttt}_u \triangleq \{\hat{t}_{u,k,j}\}, \tilde{\ttt}_u \triangleq \{\tilde{t}_{u,k,j}\}$ are additional variables. 
		By using \eqref{tdki-upperbound}, \eqref{tukj-lowerbound} and \eqref{tukj-upperbound}, we replace constraint \eqref{QoSbound-1} by 
		\begin{align}
			\label{QoSbound-1-a}
			&\sum_{i\in\K}\tilde{t}_{d,k,i} \!+\! t_{C,k}(f_k) \!+\! \sum_{i\in\K}\tilde{t}_{u,k,i}
			\leq t, \forall k
			\\
			\label{QoSbound-1-b}
			&\sum_{i\in\K}\hat{t}_{d,k,i} \!+\! t_{C,k}(f_k) \!+\! \sum_{i\in\K}\hat{t}_{u,k,i}
			\geq t, \forall k,
		\end{align}
		and constraint \eqref{syncbound} by
		\begin{align}
			\label{syncbound-1}
			& \sum_{i\in\K} \tilde{t}_{d,k,i} \leq q, \forall k
			\\
			\label{syncbound-2}
			& q \leq q_{1,k} + q_{2,k}, \forall k
			\\
			\label{syncbound-3}
			& q_{1,k} \leq \sum_{i\in\K} \hat{t}_{d,k,i}, \forall k
			\\
			\label{syncbound-4}
			& q_{2,k} \leq \frac{LD_kc_k}{f_k}, \forall k,
		\end{align}
		where $t, q, \pmb{q}_1 \triangleq \{q_{1,k}\}, \pmb{q}_2 \triangleq \{q_{2,k}\}$ are additional variables.  
		Now, problem \eqref{Pmain:sb} can be transformed into its more tractable epigraph form as
		\begin{subequations}\label{Pmain:epi}
			\begin{align}
				\label{CFPmulti}
				\underset{\widetilde{\x}}{\min} \,\,
				& \widetilde{E}_{SB}(\f,\vv_d,\vv_u) 
				\\
				\mathrm{s.t.}\,\,
				\nonumber
				& \eqref{a}-
				\eqref{powerdupperbound},
				\eqref{samesizedatad},
				\eqref{b}-
				\eqref{poweruupperbound}, 
				\eqref{samesizedatau}, 
				\eqref{powerlowerbound}, \eqref{fbound}, \eqref{ratedki-lowerbound-1}-\eqref{syncbound-4}
				\\
				\label{obj-upperbound-1}
				& \eta_{k,i} \tilde{t}_{d,k,i} \leq v_{d,k,i}, \forall k,i
				\\
				\label{obj-upperbound-2}
				& \zeta_{k,j} \tilde{t}_{u,k,j} \leq v_{u,k,j}, \forall k,j,
			\end{align}
		\end{subequations}
		where $\widetilde{\x} \triangleq \{\x, \hat{\rr}_d, \tilde{\rr}_d, \hat{\ttt}_d, \tilde{\ttt}_d, \hat{\rr}_u, \tilde{\rr}_u, \hat{\ttt}_u, \tilde{\ttt}_u, t, q, \pmb{q}_1, \pmb{q}_2, \vv_d, \vv_u\}$, 
		$\vv_d \triangleq \{v_{d,k,i}\}, \vv_u \triangleq \{v_{u,k,j}\}$ are additional variables, and $\widetilde{E}_{SB}(\f,\vv_d,\vv_u) \triangleq \sum_{k\in\K}\sum_{i\in\K} \rho_d N_0 v_{d,k,i} +
		\sum_{k\in\K} \!\Big( L\frac{\alpha}{2}c_{k}D_kf_{k}^2 \!+\!\sum_{j\in\K} \rho_u N_0 v_{u,k,j} \Big)$.
		
		Next, to deal with the binary constraints \eqref{a} and \eqref{b}, we observe that $x\in\{0,1\}\Leftrightarrow x\in[0,1]\,\&\,x-x^2\leq0$ \cite{vu18TCOM,vu18TWC}. Therefore, \eqref{a} and \eqref{b} are equivalent to the following constraints
		\begin{align}
			\label{sumab}
			\!\!\!\!\!& V_1
			\triangleq\!\! \sum_{k\in\K} \sum_{i\in\K} (a_{k,i}\!-\!a_{k,i}^2) \!+\!\!  \sum_{k\in\K} \sum_{j\in\K} (b_{k,j}\!-\!b_{k,j}^2) \leq 0
			\\
			\label{abrelax}
			& 0\leq a_{k,i}\! \leq 1, 0\leq b_{k,j}\!\leq 1, \forall k,i,j.
		\end{align}
		Also, we observe from \eqref{ratedki-lowerbound-1}--\eqref{sametime:sessisionS1-1-a} and \eqref{rateukj-lowerbound-1}--\eqref{sametime:sessisionS3-1-a} that $S_{d,k,i} \leq \tilde{r}_{d,k,i} \tilde{t}_{d,k,i}, S_{u,k,j} \leq \tilde{r}_{u,k,j} \tilde{t}_{u,k,j} \forall k,i,j$. Thus, \eqref{sametime:sessisionS1-2-a} and \eqref{sametime:sessisionS3-2-a} are equivalent to the following constraints
		\begin{align}
			\label{V2}
			V_2
			&\triangleq \sum_{k\in\K}\sum_{i\in\K} (\tilde{r}_{d,k,i} \tilde{t}_{d,k,i} - S_{d,k,i}) \leq 0
			\\
			\label{V3}
			V_3
			&\triangleq \sum_{k\in\K}\sum_{i\in\K} (\tilde{r}_{u,k,j} \tilde{t}_{u,k,j} - S_{u,k,j}) \leq 0.
		\end{align}
		Similarly, from \eqref{QoSbound-1-a}, constraint \eqref{QoSbound-1-b} is equivalent to
		\begin{align}
			\label{V4}
			V_4
			\triangleq \sum_{k\in\K} \big(t - \sum_{i\in\K}\hat{t}_{d,k,i} \!-\! t_{C,k}(f_k) \!-\! \sum_{i\in\K}\hat{t}_{u,k,i}\big)  \leq 0.
		\end{align}
		Therefore, problem \eqref{Pmain:epi} is equivalent to
		\begin{align}
			\label{Pmain:epi:equiv}
			\underset{\widetilde{\x}\in\FF}{\min} \,\,
			& \widetilde{E}_{SB}(\f,\vv_d,\vv_u),
		\end{align}
		where $\FF\! \triangleq\! \{\eqref{sumaki}, \eqref{powerdupperbound}, \eqref{samesizedatad},
		\eqref{sumbkj}, \eqref{poweruupperbound}, \eqref{samesizedatau},
		\eqref{powerlowerbound}, \eqref{fbound}, \eqref{ratedki-lowerbound-1}-\eqref{sametime:sessisionS1-1-a}, \eqref{rateukj-lowerbound-1}-\eqref{sametime:sessisionS3-1-a}, \eqref{QoSbound-1}-\eqref{syncbound-4}, \eqref{obj-upperbound-1},\\ \eqref{obj-upperbound-2},\eqref{sumab}-\eqref{V4}\}$. 
		Then, we consider the problem
		\begin{align}\label{Pmain:epi:relax}
			\underset{\widetilde{\x} \in \widehat{\FF}}{\min} \,\,
			&\LL (\widetilde{\x})
			,
		\end{align}
		where $\LL (\widetilde{\x})
		\triangleq \widetilde{E}_{SB}(\f,\vv_d,\vv_u) \!+\! \lambda( \gamma_1V_1(\aaa,\bb) + \gamma_2 V_2(\tilde{\rr}_d, \tilde{\ttt}_d, \Ss_d) + \gamma_3 V_3(\tilde{\rr}_u, \tilde{\ttt}_u, \Ss_u) + \gamma_4 V_4(\f, \hat{\ttt}_d, \hat{\ttt}_u))$ is the Lagrangian of \eqref{Pmain:epi:equiv}, $\gamma_1, \gamma_2, \gamma_3, \gamma_4 > 0$ are fixed weights, and $\lambda$ is the Lagrangian multiplier corresponding to constraints \eqref{sumab}--\eqref{V4}. Here, $\widehat{\FF} \triangleq \FF \setminus \{\eqref{sumab}-\eqref{V4}\}$.
		\vspace{-0mm}
		\begin{proposition}
			\label{proposition-dual}
			The following statements hold:
			\renewcommand{\labelenumi}{(\roman{enumi})}
			\begin{enumerate}
				\item The values $V_{1,\lambda}$, $V_{2,\lambda}$, $V_{3,\lambda}$, $V_{4,\lambda}$ of $V_1, V_2, V_3, V_4$ at the solution of \eqref{Pmain:epi:relax} corresponding to $\lambda$ converge to $0$ as $\lambda \rightarrow +\infty$.
				\item Problem \eqref{Pmain:epi:equiv} has the following property
				\begin{equation}\label{Strong:Dualitly:hold}
					\underset{\widetilde{\x}\in\FF}{\min}\,\,
					\widetilde{E}_{SB}(\f,\vv_d,\vv_u)
					=
					\underset{\lambda\geq0}{\sup}\,\,
					\underset{\widetilde{\x}\in\widehat{\FF}}{\min}\,\,
					\LL (\widetilde{\x})
					,
				\end{equation}
				and therefore, it is equivalent to \eqref{Pmain:epi:relax}  at the optimal solution $\lambda^* \geq0$ of the sup-min problem in \eqref{Strong:Dualitly:hold}.
			\end{enumerate}
		\end{proposition}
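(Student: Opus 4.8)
The plan is to read \eqref{Pmain:epi:relax} as a penalty reformulation of \eqref{Pmain:epi:equiv} and run the classical asymptotic‑penalty argument. First I would record two structural facts. (a) $\widehat{\FF}$ is compact: $a_{k,i},b_{k,j}\in[0,1]$, $f_k\in[0,f_{\max}]$, the power coefficients lie in $[0,1]$ by \eqref{powerdupperbound}/\eqref{poweruupperbound}, and the rate, time, data‑size, and slack variables are all bounded through the budgets \eqref{samesizedatad}/\eqref{samesizedatau}, the QoS constraint \eqref{QoSbound-1-a}, and the accompanying inequalities \eqref{ratedki-lowerbound-1}--\eqref{sametime:sessisionS3-2-a}; the auxiliary variables $v_{d,k,i},v_{u,k,j}$ appear only (linearly, with positive weights) in $\widetilde E_{SB}$ and are pinned to the bounded lower bounds \eqref{obj-upperbound-1}--\eqref{obj-upperbound-2} at any minimizer, so after a harmless restriction they too lie in a fixed box. (b) On $\widehat{\FF}$ the penalties are continuous and nonnegative: $V_1\ge0$ because $x-x^2\ge0$ on $[0,1]$; $V_2,V_3\ge0$ because the chains \eqref{ratedki-lowerbound-1}--\eqref{sametime:sessisionS1-1-a} and \eqref{rateukj-lowerbound-1}--\eqref{sametime:sessisionS3-1-a} force $S_{d,k,i}\le\tilde r_{d,k,i}\tilde t_{d,k,i}$ and $S_{u,k,j}\le\tilde r_{u,k,j}\tilde t_{u,k,j}$; and $V_4\ge0$ because $\hat t_{d,k,i}\le\tilde t_{d,k,i}$, $\hat t_{u,k,j}\le\tilde t_{u,k,j}$ (from \eqref{tdki-lowerbound}--\eqref{tdki-upperbound}, \eqref{tukj-lowerbound}--\eqref{tukj-upperbound}) together with \eqref{QoSbound-1-a}. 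Hence $\FF=\{\widetilde{\x}\in\widehat{\FF}:V_1=\dots=V_4=0\}$, and on $\FF$ one has $\LL(\widetilde{\x},\lambda)=\widetilde E_{SB}(\widetilde{\x})$.

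For \textbf{(i)} I would set $\widetilde E^{\star}:=\min_{\widetilde{\x}\in\FF}\widetilde E_{SB}$ (assuming \eqref{Pmain:sb} feasible) and let $\widetilde{\x}_\lambda$ be a minimizer of $\LL(\cdot,\lambda)$ over the compact set $\widehat{\FF}$, writing $V_{n,\lambda}:=V_n(\widetilde{\x}_\lambda)$. Since $\FF\subseteq\widehat{\FF}$ and $\LL=\widetilde E_{SB}$ on $\FF$, we get $\LL(\widetilde{\x}_\lambda,\lambda)\le\widetilde E^{\star}$ for all $\lambda\ge0$; on the other hand $\LL(\widetilde{\x}_\lambda,\lambda)=\widetilde E_{SB}(\widetilde{\x}_\lambda)+\lambda\sum_{n}\gamma_nV_{n,\lambda}\ge\lambda\sum_{n}\gamma_nV_{n,\lambda}$ because $\widetilde E_{SB}\ge0$ on $\widehat{\FF}$. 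Combining, $0\le V_{n,\lambda}\le\widetilde E^{\star}/(\gamma_n\lambda)\to0$ as $\lambda\to+\infty$, which is exactly (i).

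For \textbf{(ii)} the ``$\le$'' in \eqref{Strong:Dualitly:hold} is the bound $\min_{\widehat{\FF}}\LL(\cdot,\lambda)\le\widetilde E^{\star}$ just used, valid for every $\lambda$ and hence for the supremum. For ``$\ge$'' I would use compactness: along $\lambda_m\to+\infty$ extract $\widetilde{\x}_{\lambda_m}\to\widetilde{\x}^{\infty}\in\widehat{\FF}$; continuity of the $V_n$ and part (i) give $V_n(\widetilde{\x}^{\infty})=0$, so $\widetilde{\x}^{\infty}\in\FF$ and $\widetilde E_{SB}(\widetilde{\x}^{\infty})\ge\widetilde E^{\star}$, whence $\min_{\widehat{\FF}}\LL(\cdot,\lambda_m)=\LL(\widetilde{\x}_{\lambda_m},\lambda_m)\ge\widetilde E_{SB}(\widetilde{\x}_{\lambda_m})\to\widetilde E_{SB}(\widetilde{\x}^{\infty})\ge\widetilde E^{\star}$, establishing \eqref{Strong:Dualitly:hold}. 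Finally, $d(\lambda):=\min_{\widehat{\FF}}\LL(\cdot,\lambda)$ is concave (a pointwise infimum of functions affine in $\lambda$), nondecreasing, and bounded above by $\widetilde E^{\star}$, so $\sup_{\lambda\ge0}d(\lambda)=\widetilde E^{\star}$ and it is attained at some $\lambda^{\star}\ge0$; at such $\lambda^{\star}$ the optimal values of \eqref{Pmain:epi:relax} and \eqref{Pmain:epi:equiv} coincide, i.e.\ the two problems are equivalent.

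The hard part is the last step: the elementary monotonicity argument only gives $d(\lambda)\uparrow\widetilde E^{\star}$, so guaranteeing a \emph{finite} optimal $\lambda^{\star}$ (and that every minimizer of $\LL(\cdot,\lambda^{\star})$ over $\widehat{\FF}$ is feasible for \eqref{Pmain:epi:equiv}) needs an exact‑penalty result for the binary penalty $x-x^2$ and the bilinear penalties $\tilde r\tilde t-S$; I would obtain it by checking a Mangasarian--Fromovitz‑type constraint qualification on $\widehat{\FF}$, in the spirit of \cite{vu18TCOM,vu18TWC}. A secondary, bookkeeping‑level point is to confirm that removing \eqref{sumab}--\eqref{V4} from $\FF$ leaves every box and linear constraint in force (in particular the $[0,1]$ bounds on $\aaa,\bb$), which is what makes $\widehat{\FF}$ compact.
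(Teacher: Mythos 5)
Your proposal is correct and follows the same overall asymptotic-penalty strategy as the paper, but part (i) is argued by a genuinely different (and cleaner) route. The paper proves (i) by a two-point exchange argument: comparing the optima at $\lambda_1<\lambda_2$ it shows $V_{\lambda}$ is nonincreasing and $E_{\lambda}$ is nondecreasing, then derives a contradiction with $\widetilde{E}(\lambda)\leq\widetilde{E}^*<+\infty$ if the limit $V^*$ were positive. You instead observe that $\min_{\widehat{\FF}}\LL(\cdot,\lambda)\leq\widetilde{E}^*$ (by restricting to $\FF$, where $\LL=\widetilde{E}_{SB}$) and that the objective is bounded below on $\widehat{\FF}$, which immediately yields the quantitative bound $0\leq V_{n,\lambda}\leq C/(\gamma_n\lambda)$; this gives an explicit $O(1/\lambda)$ rate the paper's monotonicity argument does not provide. (One small robustness note: rather than relying on $\widetilde{E}_{SB}\geq 0$, which requires checking that the $v$-variables are nonnegative on $\widehat{\FF}$, it is safer to invoke boundedness below of the continuous objective on the compact set $\widehat{\FF}$; the conclusion is unchanged.) Your part (ii) — extracting a convergent subsequence, using continuity of the $V_n$ and part (i) to place the limit point in $\FF$, and sandwiching — is essentially identical to the paper's. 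Finally, the difficulty you flag at the end is real but is also present in the paper: the paper's proof establishes only the value identity \eqref{Strong:Dualitly:hold} and the optimality of the limit point $\widetilde{\x}_*$; it never shows the supremum is attained at a finite $\lambda^*$, and indeed the surrounding discussion concedes that in practice one accepts $V_{n,\lambda}\leq\varepsilon$ for a sufficiently large finite $\lambda$. So your proposed exact-penalty/constraint-qualification refinement would strengthen the result beyond what the paper actually proves, but it is not needed to match the paper's own argument.
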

		\begin{proof}
			See the Appendix.
			\vspace{-0mm}
		\end{proof}
		Proposition~\ref{proposition-dual} suggests that we can obtain the optimal solution to problem \eqref{Pmain:epi:equiv} by solving \eqref{Pmain:epi:relax} with appropriately chosen parameters $\lambda$, $\gamma_1$, $\gamma_2$, $\gamma_3$, and $\gamma_4$.
		Theoretically, it is required to have $V_{1,\lambda}=0$, $V_{2,\lambda}=0$, $V_{3,\lambda}=0$, and $V_{4,\lambda}=0$ in order to obtain the optimal solution to problem \eqref{Pmain:epi:equiv}. According to Proposition~\ref{proposition-dual}, $V_{1,\lambda}$, $V_{2,\lambda}$, $V_{3,\lambda}$, and $V_{4,\lambda}$ converge to $0$ as $\lambda\to+\infty$. Since there is always a numerical tolerance
		in computation, it is sufficient to accept $V_{1,\lambda}\leq\varepsilon$, $V_{2,\lambda}\leq\varepsilon$, $V_{3,\lambda}\leq\varepsilon$, and $V_{4,\lambda}\leq\varepsilon$ for some small $\varepsilon$ with a sufficiently large value of $\lambda$.
		In our numerical experiments, for $\varepsilon = 10^{-3}$, we see that choosing $\lambda=1$ with $\gamma_1=0.1, \gamma_2=0.01, \gamma_3=0.01, \gamma_4=0.01$ is enough to ensure $V_{1,\lambda}\leq\varepsilon$, $V_{2,\lambda}\leq\varepsilon$, $V_{3,\lambda}\leq\varepsilon$, and $V_{4,\lambda}\leq\varepsilon$. This way of choosing $\lambda$ has been widely used in the literature, e.g., \cite{vu18TWC,vu18TCOM,vu21ICC,che14TWC,Rashid14TCOM}.
		
		\vspace{-0mm}
		\subsubsection{Algorithm Development}
		Problem \eqref{Pmain:epi:relax} is still difficult to solve due to nonconvex constraints \eqref{ratedki-lowerbound-1}--\eqref{sametime:sessisionS1-1-a}, \eqref{rateukj-lowerbound-1}--\eqref{sametime:sessisionS3-1-a}, \eqref{syncbound-4}, \eqref{obj-upperbound-1}, \eqref{obj-upperbound-2} and nonconvex parts $V_1(\aaa,\bb), V_2(\tilde{\rr}_d, \tilde{\ttt}_d, \Ss_d), V_3(\tilde{\rr}_u, \tilde{\ttt}_u, \Ss_u), V_4(\f, \hat{\ttt}_d, \hat{\ttt}_u)$ in the cost function $\LL (\widetilde{\x})
		$.  To deal with constraints \eqref{ratedki-lowerbound-1} and  \eqref{rateukj-lowerbound-1}, we observe that 
			$\log\left(1+\frac{x}{y}\right) \geq \log\left(1+\frac{x^{(n)}}{y^{(n)}}\right) + \frac{2x^{(n)}}{(x^{(n)}+y^{(n)})} 
			- \frac{(x^{(n)})^2}{(x^{(n)}+y^{(n)})x}
			- \frac{x^{(n)}y}{(x^{(n)}+y^{(n)})y^{(n)}},$
		for any $x > 0, y > 0, x^{(n)}>0$, and $y^{(n)}>0$ \cite[(76)]{long21TCOM}.
		Therefore, the concave lower bounds 
		$\widehat{R}_{d,k,i} (\ETA_i)
		$ and 
		$\widehat{R}_{u,k,j}(\ZETA_j)
		$ of $R_{d,k,i}(\ETA_i)$ and $R_{u,k,j}(\ZETA_j)$ are, respectively, given by
			$\widehat{R}_{d,k,i} \triangleq \frac{\tau_c - \tau_{d,p}}{\tau_c\log 2} B \Big[ \log\big(1+\frac{\Upsilon_i^{(n)}}{\Phi_i^{(n)}}\big) + \frac{2\Upsilon_i^{(n)}}{(\Upsilon_i^{(n)}+\Phi_i^{(n)})} 
			- \frac{(\Upsilon_i^{(n)})^2}{(\Upsilon_i^{(n)}+\Phi_i^{(n)})\Upsilon_i}
			- \frac{\Upsilon_i^{(n)}\Phi}{(\Upsilon_i^{(n)}+\Phi_i^{(n)})\Phi_i^{(n)}}\Big],
			\widehat{R}_{u,k,j} \triangleq \frac{\tau_c - \tau_{u,p}}{\tau_c\log 2} B \Big[ \log\big(1+\frac{\Psi_i^{(n)}}{\Omega_i^{(n)}}\big) + \frac{2\Psi_i^{(n)}}{(\Psi_i^{(n)}+\Omega_i^{(n)})} 
			- \frac{(\Psi_i^{(n)})^2}{(\Psi_i^{(n)}+\Omega_i^{(n)})\Psi_i}
			- \frac{\Psi_i^{(n)}\Omega_i}{(\Psi_i^{(n)}+\Omega_i^{(n)})\Omega_i^{(n)}}\Big]$, 
		where $\Upsilon_i(\eta_{k,i}) \triangleq (M-K_i)\rho_d \hat{\sigma}_k^2\eta_{k,i}$, $\Phi_i(\ETA_i) \triangleq \rho_d (\beta_k - \hat{\sigma}_k^2) \sum_{\ell\in\K} \eta_{\ell,i} +1, \Psi_i(\zeta_{k,j}) \triangleq (M-N_j) \rho_u \bar{\sigma}_{k}^2 \zeta_{k,j}$, and $\Omega_i(\ZETA_j) \triangleq \rho_u \sum_{\ell \in \NN_{j}}  (\beta_{\ell} - \bar{\sigma}_{\ell}^2) \zeta_{\ell,j} + 1$. 
		Then, constraints \eqref{ratedki-lowerbound-1} and  \eqref{rateukj-lowerbound-1} can be approximated by the following convex constraints
		\begin{align}
			\label{ratedki-lowerbound-1-a}
			& \hat{r}_{d,k,i} \leq \widehat{R}_{d,k,i}(\ETA_i), \forall k,i
			\\
			\label{rateukj-lowerbound-1-a}
			& \hat{r}_{u,k,j} \leq \hat{R}_{u,k,j}(\ZETA_j), \forall k,i.
		\end{align}
		
		To deal with the constraints \eqref{ratedki-upperbound-1} and  \eqref{rateukj-upperbound-1}, we observe that
			$\log\big(1+\frac{x}{y}\big) \leq H(x,y)\triangleq 
			\log\big(x^{(n)}+y^{(n)}\big) + \frac{x + y - x^{(n)} - y^{(n)}}{x^{(n)}+y^{(n)}} - \log(y),$
		where $x > 0,y > 0, x^{(n)}>0$, and $y^{(n)}>0$.
		Therefore, the convex upper bounds $\widetilde{R}_{d,k,i}(\ETA_i)$ and $\widetilde{R}_{u,k,j}(\ZETA_j)$ of 
		$R_{d,k,i}(\ETA_i)$ and $R_{u,k,j}(\ZETA_j)$ can be respectively expressed as
			$\widetilde{R}_{d,k,i} \triangleq \frac{\tau_c - \tau_{u,p}}{\tau_c\log 2} B \Big[\log\big(\Upsilon_i^{(n)}+\Phi_i^{(n)}\big) + \frac{\Upsilon_i + \Phi_i - \Upsilon_i^{(n)} - \Phi_i^{(n)}}{\Upsilon_i^{(n)}+\Phi_i^{(n)}} - \log(\Phi_i)\Big],
			\widetilde{R}_{u,k,j} \triangleq \frac{\tau_c - \tau_{u,p}}{\tau_c\log 2} B \Big[\log\big(\Psi_i^{(n)}+\Omega_i^{(n)}\big) + \frac{\Psi_i + \Omega_i - \Psi_i^{(n)} - \Omega_i^{(n)}}{\Psi_i^{(n)}+\Omega_i^{(n)}} - \log(\Omega_i) \Big].$
		The constraints \eqref{ratedki-upperbound-1} and  \eqref{rateukj-upperbound-1} can be approximated by the following convex constraints
		\begin{align}
			\label{ratedki-upperbound-1-a}
			& \tilde{r}_{d,k,i} \geq \widetilde{R}_{d,k,i}(\ETA_i) , \forall k,i
			\\
			\label{rateukj-upperbound-1-a}
			& \tilde{r}_{u,k,j} \geq \widetilde{R}_{u,k,j}(\ZETA_j), \forall k,i.
		\end{align}
		
		Next, to deal with the constraints \eqref{tdki-lowerbound}--\eqref{sametime:sessisionS1-1-a}, \eqref{tukj-lowerbound}--\eqref{sametime:sessisionS3-1-a}, \eqref{obj-upperbound-1}, and \eqref{obj-upperbound-2},
		we observe that 
		$xy-z \leq 0.25 [(x+y)^2-2(x^{(n)}-y^{(n)})(x-y) + (x^{(n)}-y^{(n)})^2 - 4z]$ and $z-xy \leq 0.25 [4z\! +\! (x\!-\!y)^2\!-\!2(x^{(n)}\!+\!y^{(n)})(x\!+\!y)
		+ (x^{(n)}+y^{(n)})^2], \forall x, y, z, x^{(n)}, y^{(n)}, z^{(n)}$ \cite{vu20TWC}. 
		Therefore, the constraints \eqref{tdki-lowerbound}--\eqref{sametime:sessisionS1-1-a}, \eqref{tukj-lowerbound}--\eqref{sametime:sessisionS3-1-a}, \eqref{obj-upperbound-1}, and \eqref{obj-upperbound-2} can be approximated respectively by the following convex constraints
		\begin{align}
			\nonumber
			\label{tdki-lowerbound-1}
			& 0.25 [4\hat{t}_{d,k,i} + (a_{k,i}-t_{d,i})^2 -2(a_{k,i}^{(n)}+t_{d,i}^{(n)})(a_{k,i}+t_{d,i})
			\\
			& \qquad + (a_{k,i}^{(n)}+t_{d,i}^{(n)})^2] \leq 0, \forall k,i
			\\
			\nonumber
			&  0.25[(a_{k,i}+t_{d,i})^2-2(a_{k,i}^{(n)}-t_{d,i}^{(n)})(a_{k,i}-t_{d,i})
			\\
			& \qquad + (a_{k,i}^{(n)}-t_{d,i}^{(n)})^2 - 4\tilde{t}_{d,k,i}] \leq 0, \forall k,i
			\\
			\nonumber
			& 0.25 [4S_{d,k,i}\! +\! (\hat{r}_{d,k,i}\!-\!\hat{t}_{d,k,i})^2\!-\!2(\hat{r}_{d,k,i}^{(n)}\!+\!\hat{t}_{d,k,i}^{(n)})(\hat{r}_{d,k,i}\!+\!\hat{t}_{d,k,i})
			\\
			& \qquad + (\hat{r}_{d,k,i}^{(n)}+\hat{t}_{d,k,i}^{(n)})^2] \leq 0, \forall k,i
			\\
			\nonumber
			& 0.25 [4\hat{t}_{u,k,j}\! +\! (b_{k,j}\!-\!t_{u,j})^2\!-\!2(b_{k,j}^{(n)}\!+\!t_{u,j}^{(n)})(b_{k,j}\!+\!t_{u,j})
			\\
			& \qquad+ (b_{k,j}^{(n)}+t_{u,j}^{(n)})^2] \leq 0, \forall k,j
			\\
			\nonumber
			& 0.25 [(b_{k,j}+t_{u,j})^2-2(b_{k,j}^{(n)}-t_{u,j}^{(n)})(b_{k,j}-t_{u,j})
			\\
			& \qquad+ (b_{k,j}^{(n)}-t_{u,j}^{(n)})^2 - 4\tilde{t}_{u,k,j}] \leq 0, \forall k,j
			\\
			\nonumber
			& 0.25 [4S_{u,k,j}\! +\! (\hat{r}_{u,k,j}\!-\!\hat{t}_{u,j})^2\!-\!2(\hat{r}_{u,k,j}^{(n)}\!+\!\hat{t}_{u,j}^{(n)})(\hat{r}_{u,k,j}\!+\!\hat{t}_{u,j})
			\\
			& \qquad+ (\hat{r}_{u,k,j}^{(n)}+\hat{t}_{u,j}^{(n)})^2] \leq 0, \forall k,j
			\\
			\label{obj-upperbound-1-a}
			\nonumber
			& 0.25 [(\eta_{k,i}+\tilde{t}_{d,k,i})^2-2(\eta_{k,i}^{(n)}-\tilde{t}_{d,k,i}^{(n)})(\eta_{k,i}-\tilde{t}_{d,k,i})
			\\
			& \qquad+ (\eta_{k,i}^{(n)}-\tilde{t}_{d,k,i}^{(n)})^2 - 4v_{d,k,i}] \leq 0, \forall k,i
			\\
			\nonumber
			\label{obj-upperbound-2-a}
			& 0.25 [(\zeta_{k,j}+\tilde{t}_{u,k,j})^2-2(\zeta_{k,j}^{(n)}-\tilde{t}_{u,k,j}^{(n)})(\zeta_{k,j}-\tilde{t}_{u,k,j})
			\\
			& \qquad+ (\zeta_{k,j}^{(n)}-\tilde{t}_{u,k,j}^{(n)})^2 - 4v_{u,k,j}] \leq 0, \forall k,j.
		\end{align}
		Similarly, 
		the convex upper bounds 
		of the nonconvex parts $V_1(\aaa,\bb), V_2(\tilde{\rr}_d, \tilde{\ttt}_d, \Ss_d), V_3(\tilde{\rr}_u, \tilde{\ttt}_u, \Ss_u)$ are respectively given by
			$\widetilde{V}_1
			\!\triangleq\! \sum_{i\in\K}\sum_{k\in\NN} (a_{k,i}-2a_{k,i}^{(n)}a_{k,i} + (a_{k,i}^{(n)})^2) \!+\!\! \sum_{j\in\K}\sum_{k\in\NN} (b_{k,j}-2b_{k,j}^{(n)}b_{k,j} + (b_{k,j}^{(n)})^2),
			\widetilde{V}_2
			\! \triangleq\!\! \sum_{i\in\K}\sum_{k\in\NN}\!
			0.25 [(\tilde{r}_{d,k,i}\!+\!\tilde{t}_{d,k,i})^2\!-\!2(\tilde{r}_{d,k,i}^{(n)}\!-\!\tilde{t}_{d,k,i}^{(n)})(\tilde{r}_{d,k,i}\!-\!\tilde{t}_{d,k,i})
			\!+\! (\tilde{r}_{d,k,i}^{(n)}\!-\!\tilde{t}_{d,k,i}^{(n)})^2 \!-\! 4S_{d,k,i}] ,
			\widetilde{V}_3
			\! \triangleq\!\!
			\sum_{i\in\K}\sum_{k\in\NN}\!
			0.25 [(\tilde{r}_{u,k,j}\!+\!\tilde{t}_{u,k,j})^2\!-\!2(\tilde{r}_{u,k,j}^{(n)}\!-\!\tilde{t}_{u,k,j}^{(n)})(\tilde{r}_{u,k,j}\!-\!\tilde{t}_{u,k,j})
			\!+\! (\tilde{r}_{u,k,j}^{(n)}\!-\!\tilde{t}_{u,k,j}^{(n)})^2 \!-\! 4S_{u,k,j}].$

		Finally, to deal with the constraint \eqref{syncbound-4}, we see that since the function $x\mapsto \frac{1}{x}$ is convex on $(0,+\infty)$, its convex lower bound is obtained by using the first-order Taylor expansion as 
			$\frac{2}{x^{(n)}} - \frac{x}{(x^{(n)})^2} \leq \frac{1}{x},$
		for any $x > 0$, and $x^{(n)}>0$.
		Therefore,   \eqref{syncbound-4} can be replaced by the following convex constraint
		\begin{align}
			\label{syncbound-4-a}
			q_{2,k} \leq LD_kc_k \Big(\frac{2}{f_k^{(n)}} - \frac{f_k}{(f_k^{(n)})^2}\Big), \forall k.
		\end{align}
		Similarly, $V_4(\f, \hat{\ttt}_d, \hat{\ttt}_u)$ has the following convex upper bound:
		\begin{align}
			\nonumber
			\widetilde{V}_4 \!\triangleq\!\! \sum_{k\in\K}\! \Big[t \!-\! \sum_{i\in\K}\hat{t}_{d,k,i} \!-\! LD_kc_k \Big(\frac{2}{f_k^{(n)}} \!-\! \frac{f_k}{(f_k^{(n)})^2}\Big) \!-\! \sum_{i\in\K}\hat{t}_{u,k,i}\Big].
		\end{align}
		
		\begin{algorithm}[!t]
			\caption{Solving problem \eqref{Pmain:epi:relax}}
			\begin{algorithmic}[1]
				\label{alg}
				\STATE \textbf{Initialize}: Set $n\!=\!0$ and choose a random point $\widetilde{\x}^{(0)}\!\in\!\widehat{\FF}$.
				\REPEAT
				\STATE Update $n=n+1$
				\STATE Solve \eqref{Pmain:epi-approx} to obtain its optimal solution $\widetilde{\x}^*$
				\STATE Update $\widetilde{\x}^{(n)}=\widetilde{\x}^*$
				\UNTIL{convergence}
			\end{algorithmic}
		\end{algorithm}
		
		Now, at the iteration $(n+1)$, for a given point $\widetilde{\x}^{(n)}$, problem \eqref{Pmain:epi:relax} can finally be approximated by the following convex problem:
		\begin{align}
			\label{Pmain:epi-approx}
			\underset{\widetilde{\x}\in\widetilde{\FF}}{\min} \,\,
			& \widetilde{\LL} (\widetilde{\x})
		\end{align}
		where $\widetilde{\LL} (\widetilde{\x})
		\triangleq \widetilde{E}_{SB}(\f,\vv_d,\vv_u) \!+\! \lambda( \gamma_1 \widetilde{V}_1(\aaa,\bb) + \gamma_2 \widetilde{V}_2(\tilde{\rr}_d, \tilde{\ttt}_d, \Ss_d) + \gamma_3 \widetilde{V}_3(\tilde{\rr}_u, \tilde{\ttt}_u, \Ss_u) + \gamma_4 \widetilde{V}_4(\f, \hat{\ttt}_d, \hat{\ttt}_u))$ and 
		$\widetilde{\FF}\triangleq\{
		\eqref{sumaki}, \eqref{powerdupperbound}, \eqref{samesizedatad}, 
		\eqref{sumbkj}, \eqref{poweruupperbound}, \eqref{samesizedatau}, 
		\eqref{powerlowerbound}, \eqref{fbound},
		\eqref{power-a-b-relation},
		\eqref{QoSbound-1-a},\\
		\eqref{syncbound-1}-\eqref{syncbound-3},
		\eqref{abrelax},
		\eqref{ratedki-lowerbound-1-a}-
		\eqref{rateukj-upperbound-1-a},
		\eqref{tdki-lowerbound-1}-\eqref{obj-upperbound-2-a}, \eqref{syncbound-4-a}
		\}$ is a convex feasible set.
		In Algorithm~\ref{alg}, we outline the main steps to solve problem \eqref{Pmain:epi:relax}.
		Starting from a random point $\widetilde{\x}\in\widehat{\FF}$, we solve \eqref{Pmain:epi-approx} to obtain its optimal solution $\widetilde{\x}^*$, and use $\widetilde{\x}^*$ as an initial point in the next iteration. The algorithm terminates when an accuracy level of $\varepsilon$ is reached. 
		Algorithm~\ref{alg} converges to a stationary point, i.e., a Fritz John solution, of problem \eqref{Pmain:epi:relax} (hence \eqref{Pmain:epi:equiv} or \eqref{Pmain:sb}). The proof of this fact is rather standard, and it follows from \cite[Proposition 2]{vu18TCOM} and \cite[Proposition 2]{vu18TWC}. 
		
		\subsubsection{Complexity Analysis}
		Problem \eqref{Pmain:epi-approx} can be transformed to an equivalent problem that involves $V_{SB}\triangleq (16K^2+5K+2)$ real-valued scalar variables, $L_{SB}\triangleq (7K^2+12K+1)$ linear constraints, $Q_{SB}\triangleq 12K^2$ quadratic constraints. Therefore, problem \eqref{Pmain:epi-approx} requires a complexity of $\OO(\sqrt{L_{SB}+Q_{SB}}(V_{SB}+L_{SB}+Q_{SB})V_{SB}^2)$ \cite{tam16TWC}. 

		\vspace{-0mm}
		\section{Asynchronous and Synchronous Schemes: \\Problem Formulation and Solution}
		\vspace{-0mm}
		\subsection{Problem Formulation}
		Similarly, the total energy consumption of one FL communication round in the asynchronous and synchronous designs is  
			$E_{Asyn}(\ETA,\ZETA,\f) = E_{Syn} (\ETA,\ZETA,\f) 
			= E (\ETA,\ZETA,\f) \triangleq E_{d}(\ETA) + \sum_{k\in\K} \big( E_{C,k}(f_k) + E_{u,k}(\ZETA) \big)
			= \sum_{k\in\K} \rho_d N_0 \eta_{k} \frac{S_d}{R_{d,k}(\ETA)} +
			\sum_{k\in\K} \!\big( L\frac{\alpha}{2} c_{k}D_kf_{k}^2 + \rho_u N_0\zeta_{k} \frac{S_u}{R_{u,k}(\ZETA)} \big).$
		
		\subsubsection{Optimization Problem for Asynchronous Design} 
		The problem of optimizing power $(\ETA,\ZETA)$ and computing frequency $\f$ to minimize the total energy consumption of one FL communication round in the asynchronous design is formulated as
		\vspace{-4mm}
		\begin{subequations}\label{Pmain:as}
			\begin{align}
				\label{CFas}
				\underset{\ETA,\ZETA,\f}{\min} \,\,
				& E(\ETA,\ZETA,\f)
				\\
				\!\!\!\!\!\mathrm{s.t.}\,\,
				\nonumber
				& \eqref{powerdupperbound:asyn:syn}, \eqref{poweruupperbound:asyn:syn}, \eqref{fbound}
				\\
				\label{powerlowerbound:as}
				& 0\leq \eta_{k}, 0\leq \zeta_{k}, \forall k
				\\
				\label{QoSbound:as}
				& t_{d,k}(\ETA) \!+\! t_{C,k}(f_k) 
				+ t_{u,k}(\ZETA) 
				\leq t_{\text{QoS}}, \forall k
				\\
				\label{syncbound:as}
				& \max_{k\in\K} t_{d,k}(\ETA)
				\leq \min_{k\in\K} \big(t_{d,k}(\ETA) + t_{C,k} (f_k) \big).
			\end{align}
		\end{subequations}
		
		\subsubsection{Optimization Problem for Synchronous Design} Similarly, the problem of optimizing power $(\ETA,\ZETA)$ and computing frequency $\f$ to minimize the total energy consumption of one FL communication round in the synchronous design is formulated as
		\vspace{-0mm}
		\begin{subequations}\label{Pmain:s}
			\begin{align}
				\label{CFs}
				\!\!\!\!\!\underset{\ETA,\ZETA,\f}{\min} \,\,
				& E(\ETA,\ZETA,\f)
				\\
				\!\!\!\!\!\mathrm{s.t.}\,\,
				\nonumber
				& \eqref{powerdupperbound:asyn:syn}, \eqref{poweruupperbound:asyn:syn}, \eqref{fbound}, \eqref{powerlowerbound:as}
				\\
				\label{QoSbound:s}
				& \max_{k\in\K} t_{d,k}(\ETA) \!+\! \max_{k\in\K} t_{C,k}(f_k) 
				+ \max_{k\in\K} t_{u,k}(\ZETA) 
				\leq t_{\text{QoS}}, \forall k.
			\end{align}
		\end{subequations}
		Here, the constraint \eqref{QoSbound:s} captures the nature of ``step-by-step'' scheme, 
		, i.e., every UE needs to wait for all the UEs to finish one step before starting the next step as seen in Fig.~\ref{fig:time1}(a). Compared to \eqref{QoSbound:s}, the constraints \eqref{QoSbound-1} and \eqref{QoSbound:as} provide more flexibility in allocating the available time in Steps (S1)--(S3) to each UE. This is because the UEs in the asynchronous and session-based schemes
		need not wait for other UEs to start a new step.

		\vspace{-0mm}
		\subsection{Solution}
		\vspace{-0mm}
		\subsubsection{Proposed Solution for Asynchronous Design}
		Problem \eqref{Pmain:as} can be transformed into its epigraph form as
		\begin{subequations}\label{Pmain:as:epi}
			\begin{align}
				\label{CFas:epi}
				\!\!\!\!\!\underset{\y}{\min} \,\,
				& \widetilde{E} (\f, \OOmega_d, \OOmega_u) 
				\\
				\!\!\!\!\!\mathrm{s.t.}\,\,
				\nonumber
				& \eqref{powerdupperbound:asyn:syn}, \eqref{poweruupperbound:asyn:syn}, \eqref{fbound}, \eqref{powerlowerbound:as}
				\\
				\label{ratedk-lowerbound-1}
				& r_{d,k} \leq R_{d,k}(\ETA), \forall k
				\\
				\label{rateuk-lowerbound-1}
				& r_{u,k} \leq R_{u,k}(\ZETA), \forall k
				\\
				\label{omegadk}
				& \eta_k - r_{d,k} \omega_{d,k} \leq 0, \forall k
				\\
				\label{omegauk}
				& \zeta_k - r_{u,k} \omega_{u,k} \leq 0, \forall k
				\\
				\label{QoSbound:as:a}
				& \frac{S_d}{r_{d,k}} \!+\! \frac{LD_kc_k}{f_k}
				+ \frac{S_u}{r_{u,k}}
				\leq t_{\text{QoS}}, \forall k
				\\
				\label{syncbound:as:a}
				& \frac{S_{d}}{r_{d,k}} \leq q, \forall k
				\\
				\label{syncbound:as:b}
				& q \leq q_{1,k} + q_{2,k}, \forall k
				\\
				\label{syncbound:as:c}
				& 0 \leq q_{1,k}, 0 \leq q_{2,k}, \forall k
				\\
				\label{syncbound:as:d}
				& q_{1,k} \leq \frac{S_{d}}{r_{d,k}}, \forall k
				\\
				\label{syncbound:as:e}
				& q_{2,k} \leq \frac{LD_kc_{k}}{f_{k}}, \forall k,
			\end{align}
		\end{subequations}
		where $\y \triangleq \{\ETA, \ZETA, \rrr_d, \rrr_u, \f, \OOmega_d, \OOmega_u, q, \pmb{q}_1, \pmb{q}_2\}$, $\rrr_d \triangleq \{r_{d,k}\}, \rrr_u \triangleq \{r_{u,k}\}, \OOmega_d \triangleq \{\omega_{d,k}\}, \OOmega_u \triangleq \{\omega_{u,k}\}, \forall k$, $q, \pmb{q}_1, \pmb{q}_2$ are additional variables, and $\widetilde{E} (\f, \OOmega_d, \OOmega_u) \triangleq \sum_{k\in\K} \rho_d N_0 S_d \omega_{d,k} +
		\sum_{k\in\K} \!\big( L\frac{\alpha}{2} c_{k}D_kf_{k}^2 + \rho_u N_0 S_u \omega_{u,k} \big)$. Problem \eqref{Pmain:as:epi} are still challenging due to nonconvex constraints \eqref{ratedk-lowerbound-1}--\eqref{omegauk}, \eqref{syncbound:as:d}, and \eqref{syncbound:as:e}.
		
		Following the same procedure in Section~\ref{sec:alg:sb}, 
		the concave lower bounds 
		of $R_{d,k}(\ETA)$ and $R_{u,k}(\ZETA)$ is respectively given as
			$\widehat{R}_{d,k} (\ETA)\triangleq \frac{\tau_c - \tau_{d,p}}{\tau_c\log 2} B \Big[ \log\big(1+\frac{\Upsilon^{(n)}}{\Phi^{(n)}}\big) + \frac{2\Upsilon^{(n)}}{(\Upsilon^{(n)}+\Phi^{(n)})} 
			- \frac{(\Upsilon^{(n)})^2}{(\Upsilon^{(n)}+\Phi^{(n)})\Upsilon}
			- \frac{\Upsilon^{(n)}\Phi}{(\Upsilon^{(n)}+\Phi^{(n)})\Phi^{(n)}}\Big],
			\widehat{R}_{u,k} (\ZETA) \triangleq \frac{\tau_c - \tau_{u,p}}{\tau_c\log 2} B \Big[ \log\big(1+\frac{\Psi^{(n)}}{\Omega^{(n)}}\big) + \frac{2\Psi^{(n)}}{(\Psi^{(n)}+\Omega^{(n)})} 
			- \frac{(\Psi^{(n)})^2}{(\Psi^{(n)}+\Omega^{(n)})\Psi}
			- \frac{\Psi^{(n)}\Omega}{(\Psi^{(n)}+\Omega^{(n)})\Omega^{(n)}}\Big], $
		where $\Upsilon(\eta_k) \triangleq (M-K)\rho_d \hat{\sigma}_k^2\eta_{k}$, $\Phi(\ETA) \triangleq \rho_d (\beta_k - \hat{\sigma}_k^2) \sum_{\ell\in\K} \eta_{\ell} +1, \Psi(\zeta_k) \triangleq (M-K) \rho_u \bar{\sigma}_{k}^2 \zeta_{k}$, and $\Omega(\ZETA) \triangleq \rho_u \sum_{\ell \in \K}  (\beta_{\ell} - \bar{\sigma}_{\ell}^2) \zeta_{\ell} + 1$. Then, constraints \eqref{ratedk-lowerbound-1} and  \eqref{rateuk-lowerbound-1} can be approximated by the following convex constraints
		\begin{align}
			\label{ratedk-lowerbound-1-a}
			& r_{d,k} \leq \widehat{R}_{d,k}(\ETA), \forall k
			\\
			\label{rateuk-lowerbound-1-a}
			& r_{u,k} \leq \hat{R}_{u,k}(\ZETA), \forall k.
		\end{align}
		Also, 
		constraints \eqref{omegadk}, and \eqref{omegauk} can be approximated by the following respective convex constraints
		\begin{align}
			\nonumber
			&0.25 [4\eta_k + (r_{d,k}-\omega_{d,k})^2-2(r_{d,k}^{(n)}+\omega_{d,k}^{(n)})(r_{d,k}+\omega_{d,k})
			\\
			& \qquad + (r_{d,k}^{(n)}+\omega_{d,k}^{(n)})^2] \leq 0, \forall k
			\\
			\nonumber
			\label{omegauk:a}
			&0.25 [4\zeta_k + (r_{u,k}-\omega_{u,k})^2-2(r_{u,k}^{(n)}+\omega_{u,k}^{(n)})(r_{u,k}+\omega_{u,k})
			\\
			& \qquad + (r_{u,k}^{(n)}+\omega_{u,k}^{(n)})^2] \leq 0, \forall k.
		\end{align}
		Finally, 
		we replace constraint \eqref{syncbound:as:d} by the following convex constraint
		\begin{align}
			\label{q1:a}
			q_{1,k} \leq S_d \Big(\frac{2}{r_{d,k}^{(n)}} - \frac{r_{d,k}}{(r_{d,k}^{(n)})^2}\Big), \forall k,
		\end{align}
		and constraint \eqref{syncbound:as:e} by \eqref{syncbound-4-a}.
		
		At the iteration $(n+1)$, for a given point $\y^{(n)}$, problem \eqref{Pmain:as:epi} can be approximated by the following convex problem:
		\begin{align}\label{Pmain:as:approx}
			\underset{\y\in\widetilde{\HHH}}{\min} \,\,
			& \widetilde{E}(\f,\OOmega_d,\OOmega_u),
		\end{align}
		where $\widetilde{\HHH}\triangleq\!\{
		\eqref{powerdupperbound:asyn:syn}, \eqref{poweruupperbound:asyn:syn}, \eqref{fbound}, \eqref{powerlowerbound:as}, \eqref{syncbound-4-a},  \eqref{QoSbound:as:a}-\eqref{syncbound:as:c}, \eqref{ratedk-lowerbound-1-a}-\eqref{q1:a}
		\}$ is a convex feasible set. In Algorithm~\ref{alg:as}, we outline the main steps to solve problem \eqref{Pmain:as:epi}.
		Let $\HHH\triangleq \{\eqref{powerdupperbound:asyn:syn}, \eqref{poweruupperbound:asyn:syn}, \eqref{fbound}, \eqref{powerlowerbound:as},\eqref{ratedk-lowerbound-1}-\eqref{syncbound:as:e}\}$ be the feasible set of problem \eqref{Pmain:as:epi}. 
		Starting from a random point $\y\in\HHH$, we solve \eqref{Pmain:as:approx} to obtain its optimal solution $\y^*$, and use $\y^*$ as an initial point in the next iteration. The algorithm terminates when an accuracy level of $\varepsilon$ is reached. Algorithm~\ref{alg:as} converges to a Fritz John solution of \eqref{Pmain:as:epi} (hence \eqref{Pmain:as}) \cite[Proposition 2]{vu18TCOM} and \cite[Proposition 2]{vu18TWC}. Furthermore, in the setting where $\widetilde{\HHH}$ satisfies Slater's constraint qualification condition,
		Algorithm~\ref{alg} converges to a Karush-Kuhn-Tucker solution of \eqref{Pmain:as:epi} (hence \eqref{Pmain:as}) \cite[Theorem 1]{Marks78OR}.
		
		\begin{algorithm}[!t]
			\caption{Solving problem \eqref{Pmain:as:epi}}
			\begin{algorithmic}[1]\label{alg:as}
				\STATE \textbf{Initialize}: Set $n\!=\!0$ and choose a random point $\y^{(0)}\!\in\!\HHH$.
				\REPEAT
				\STATE Update $n=n+1$
				\STATE Solve \eqref{Pmain:as:approx} to obtain its optimal solution $\y^*$
				\STATE Update $\y^{(n)}=\y^*$
				\UNTIL{convergence}
			\end{algorithmic}
			\vspace{+0mm}
		\end{algorithm}
		
		\subsubsection{Proposed Algorithm for Synchronous Design}
		Using a similar procedure to solve problem \eqref{Pmain:as}, we approximate \eqref{Pmain:s} by the following convex problem
		\begin{subequations}\label{Pmain:s:approx}
			\begin{align}
				\underset{\z}{\min} \,\,
				& \widetilde{E}(\f,\OOmega_d,\OOmega_u)
				\\
				\nonumber
				\mathrm{s.t.}\,\,
				& \eqref{powerdupperbound:asyn:syn}, \eqref{poweruupperbound:asyn:syn}, \eqref{fbound}, \eqref{powerlowerbound:as},
				\eqref{ratedk-lowerbound-1-a}-\eqref{omegauk:a}
				\\
				& t_d + t_C + t_u \leq t_{\text{QoS}}
				\\
				& \frac{S_{d}}{r_{d,k}} \leq t_d, \forall k
				\\
				& \frac{LD_kc_{k}}{f_{k}} \leq t_C, \forall k
				\\
				& \frac{S_{u}}{r_{u,k}} \leq t_u, \forall k,
			\end{align}
		\end{subequations}
		where $\rr_d,\rr_u,\OOmega_d,\OOmega_u,t_d,t_C,t_u$ are additional variables and $\z \triangleq \{\ETA, \ZETA, \f,\rr_d,\rr_u,\OOmega_d,\OOmega_u,t_d,t_C,t_u\}$.
		Then, problem \eqref{Pmain:as} can be solved by using Algorithm~\ref{alg:as} where Step 4 solves \eqref{Pmain:s:approx} (instead of \eqref{Pmain:as:approx}).

		\subsubsection{Complexity Analysis}
		\vspace{-0mm}
		The transformed versions of problems \eqref{Pmain:as:approx} and \eqref{Pmain:s:approx}  involve smaller numbers of variables and constraints than the version of problem \eqref{Pmain:epi-approx}, i.e., $V_{Asyn}\triangleq (9K+1)$ and $V_{Syn}\triangleq (7K+4)$ real-valued scalar variables, $L_{Asyn}\triangleq (11K+1)$ and $L_{Syn}\triangleq (7K+2)$ linear constraints, $Q_{Asyn}\triangleq 6K$ and $Q_{Syn}\triangleq 7K$ quadratic constraints. Therefore, problems \eqref{Pmain:as:approx} and \eqref{Pmain:s:approx} respectively require complexities of $\OO(\sqrt{L_{Asyn}+Q_{Asyn}}(V_{Asyn}+L_{Asyn}+Q_{Asyn})V_{Asyn}^2)$ and $\OO(\sqrt{L_{Syn}+Q_{Syn}}(V_{Syn}+L_{Syn}+Q_{Syn})V_{Syn}^2)$, 
		which are lower than that of problem \eqref{Pmain:epi-approx}. 

		\vspace{-0mm}
		\section{Numerical Examples}
		\vspace{-0mm}
		\label{sec:sim}
		\subsection{Network Setup and Parameter Settings}
		\label{sec:sim:parameter}
		\vspace{-0mm}
		We consider an mMIMO network in a square of $D\times D$, where the BS is located at the center and the UEs are located randomly within the square. 
		We choose $D=0.25$ km, and set $\tau_c\!=\!200$ samples.
		The large-scale fading coefficients,  $\beta_{k}$, are modeled in the same manner as \cite{3gpp10}:
			$\beta_k[\text{dB}] = - 148.1  - 37.6 \log_{10}\left(\frac{d_k}{1\,\,\text{km}}\right) + z_k,$
		where $d_k \geq 35$ m is the distance between a UE $k$ and the BS, $z_k$ is a shadow fading coefficient modelled according to a log-normal distribution with zero mean and $7$-dB standard deviation. We choose $B = 20$ MHz, $\tau_{d,p} =\tau_{u,p} \!=\! K$, $S_d\!=\!S_u\!=\!1$ MB, noise power $\sigma_0^2\!=\!-92$ dBm, $L=5$, $f_{\max}=5 \times 10^9$ cycles/s, $D_k = 10^4$ samples, $c_k = 20$ cycles/samples \cite{tran19INFOCOM}, for all $k$, and $\alpha=5\times 10^{-21}$. 
		Let $\tilde{\rho}_d\!=\!10$ W, $\tilde{\rho}_u\!=\!0.2$ W and $\tilde{\rho}_p\!=\!0.2$ W be the maximum transmit power of the BS, UEs and UL pilot sequences, respectively. The maximum transmit powers $\rho_d$, $\rho_u$ and $\rho_p$ are normalized by the noise power.


		\vspace{-0mm}
		\subsection{Results and Discussion}
		\vspace{-0mm}
		As discussed in Remark~\ref{remark:nolearning}, our paper focuses on the communication aspects rather than the learning aspects of the implementation of FL over wireless networks. 
		Therefore, the simulation results of our paper do not include dataset or the learning performance (e.g., convergence speed, training loss, and test accuracy), which is similar to many existing DigComp works in the literature such as \cite{yang21TWC,zeng21TWC,wu22TCOM,viet22TVT,pavlos22CL}.
		
		\subsubsection{Effectiveness of the Proposed Schemes}
		\label{discuss:effectiveness}
		First, we evaluate the convergence behavior of our proposed Algorithms~\ref{alg} and~\ref{alg:as}. Fig.~\ref{Fig:convergence} shows that Algorithm~\ref{alg} converges within $60$ iterations for the session-based scheme, while Algorithm~\ref{alg:as} converges within $30$ iterations for the asynchronous and synchronous schemes. It should be noted that each iteration of Algorithm~\ref{alg} or~\ref{alg:as} involves solving simple convex programs, i.e., \eqref{Pmain:epi-approx}, \eqref{Pmain:as:approx} and \eqref{Pmain:s:approx}. 
		
		\begin{figure}[t!]
			\centering
			\vspace{-5mm}
			\subfigure[Algorithm 1 for the session-based scheme]
			{\includegraphics[width=0.5\textwidth]{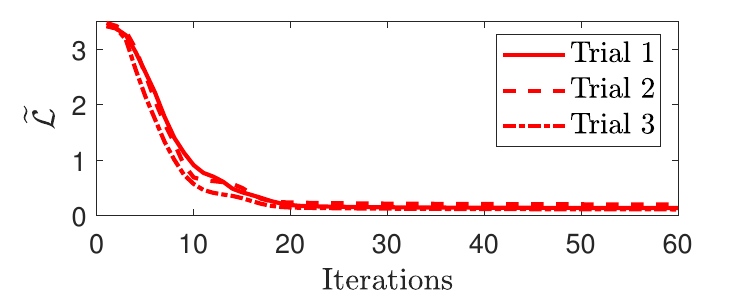}\label{subfig:conv1}}
			\vspace{-2mm}
			\subfigure[Algorithm 2 for the asynchronous scheme]
			{\includegraphics[width=0.5\textwidth]{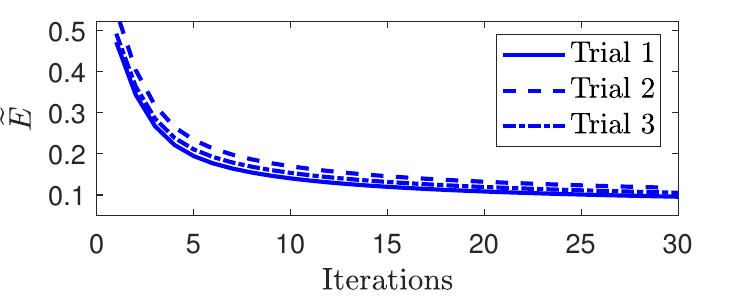}\label{subfig:conv2}}
			\subfigure[Algorithm 2 for the synchronous scheme]
			{\includegraphics[width=0.5\textwidth]{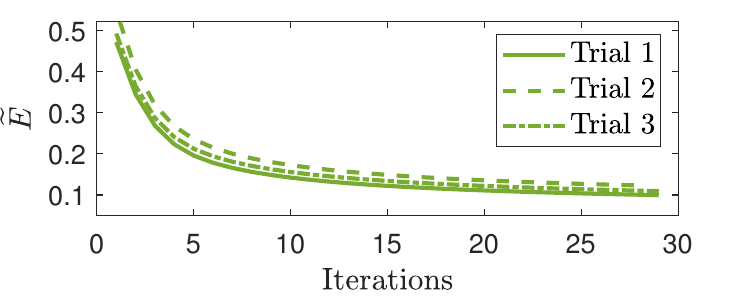}\label{subfig:conv3}}
			\vspace{-2mm}
			\caption{Convergence  of Algorithms~\ref{alg} and~\ref{alg:as}. Here, $M=75, K=10$ and $t_{\text{QoS}}=1$ s.}
			\label{Fig:convergence}
			\vspace{+0mm}
		\end{figure}

		Next, since we are aware of no other existing work that studies energy-efficient massive MIMO networks for supporting FL, we compare the proposed session-based scheme (\textbf{OPT\_SB}), asynchronous scheme (\textbf{OPT\_Asyn}) and synchronous scheme (\textbf{OPT\_Syn}) with the following heuristic schemes:   
		\vspace{-0mm}
		\begin{itemize}
			\item \textbf{HEU\_SB} (Heuristic session-based scheme): In each session, a UE that has a less favorable link condition (i.e., smaller large-scale fading coefficient) is allocated more power to meet the required execution time of one FL communication round. 
			First, since all UEs participated in the DL session $1$ and the UL session $K$, we let $a_{k,1} = b_{k,K} = 1, \forall k$, and take the power allocated to a UE $k$ in the DL session $1$ to be $\eta_{k,1}\!=\!\frac{a_{k,1}(1/\beta_k)}{\sum_{k'\in\K}a_{k',1}(1/\beta_{k'})}$ and the transmit power of a UE $k$ in the UL session $K$ to be $\zeta_{k,K}\!=\!\frac{b_{k,K}(1/\beta_k)}{\sum_{k'\in\K}b_{k',K}(1/\beta_{k'})}$. 
			Now, in each DL session $i, \forall i\neq 1$, since the UE that has the highest data rate finishes its transmission earlier than other UEs and it does not join the subsequent sessions, we choose $a_{k,i}=a_{k,i-1}, k\in\K \setminus k'$, where $k' = \arg\max_{k,i-1} R_{d,k,i-1}$ and $R_{d,k,i-1}$ is obtained by using the given $\ETA_{k,i-1}$ and \eqref{Rd}. Similarly, in each UL session $j, \forall j\neq K$, $b_{k,j-1}=b_{k,j}, k\in\K \setminus k^*$, where $k^* = \arg\max_{k,j} R_{u,k,j}$ and $R_{u,k,j}$ is obtained by using the given $\ZETA_{k,j}$ and \eqref{Ru}. Then, the DL power allocated to a UE $k$ in a DL session $i, \forall i\neq 1$ is $\eta_{k,i}\!=\!\frac{a_{k,i}(1/\beta_k)}{\sum_{k'\in\K}a_{k',i}(1/\beta_{k'})}$ and the UL power of a UE $k$ in an UL session $j, \forall j\neq K$ is $\zeta_{k,j}\!=\!\frac{b_{k,j}(1/\beta_k)}{\sum_{k'\in\K}b_{k',j}(1/\beta_{k'})}$.
			Let $\RR_d \in \RRR^{K\times K}$ and $\RR_u \in \RRR^{K\times K}$, respectively, be the matrices of the DL and UL rates, where $[\RR_d]_{k,i} = R_{d,k,i}$ and $[\RR_u]_{k,j} = R_{u,k,j}$.
			Denote by $\ttt_d\in \RRR^{1\times K}$ and $\ttt_u\in \RRR^{1\times K}$, respectively, be the row vectors comprising the transmission times
			of DL and UL sessions, where $[\ttt_d]_{i} = t_{d,i}$ and $[\ttt_u]_{j} = t_{u,j}$. Let $\pmb{1}\in \RRR^{K\times 1}$ be a column vector of all ones.
			Then, from \eqref{samesizedatad} and \eqref{sametime:sessisionS1}, we have $\ttt_d = (\RR_d^{-1}S_d\pmb{1})^T$. Similarly, from \eqref{samesizedatau} and \eqref{sametime:sessisionS3}, we have $\ttt_u = (\RR_u^{-1}S_u\pmb{1})^T$. The  DL and UL data sizes in each session are calculated according to \eqref{sametime:sessisionS1} and \eqref{sametime:sessisionS3}. 
			The processing frequencies are $f_{k} = \frac{LD_kc_{k}}{t_{\text{QoS}}-\sum_{i\in\K}a_{k,i}t_{d,i} - \sum_{j\in\K}b_{k,j}t_{u,j}}, \forall k$. 
			
			\item \textbf{HEU\_Asyn} (Heuristic asynchronous scheme): The idea of heuristic power allocation in HEU\_SB is applied to the asynchronous scheme. 
			In particular, the DL power to all the UEs are $\eta_{k}\!=\!\frac{(1/\beta_k)}{\sum_{k'\in\K}(1/\beta_{k'})}$ and the UL power of UE $k$ is $\zeta_{k}\!=\!\frac{(1/\beta_k)}{\sum_{k'\in\K}(1/\beta_{k'})}$. The processing frequencies are $f_{k} = \frac{LD_kc_{k}}{t_{\text{QoS}-t_{d,k} - t_{u,k}}}, \forall k$. 
			
			\item \textbf{HEU\_Syn} (Heuristic synchronous scheme):  
			This scheme is similar to HEU\_Asyn, except that the processing frequencies are instead set as $f_{k} \!\!=\!\! \frac{LD_kc_{k}}{t_{\text{QoS}-\max_{k\in\K}t_{d,k} - \max_{k\in\K}t_{u,k}}}$, $\forall k$. 
		\end{itemize} 
		\noindent
		All the following results are obtained by averaging over $200$ channel realizations.
		
		\begin{figure}[t!]
			\centering
			\vspace{-5mm}
			\subfigure[]
			{\includegraphics[width=0.5\textwidth]{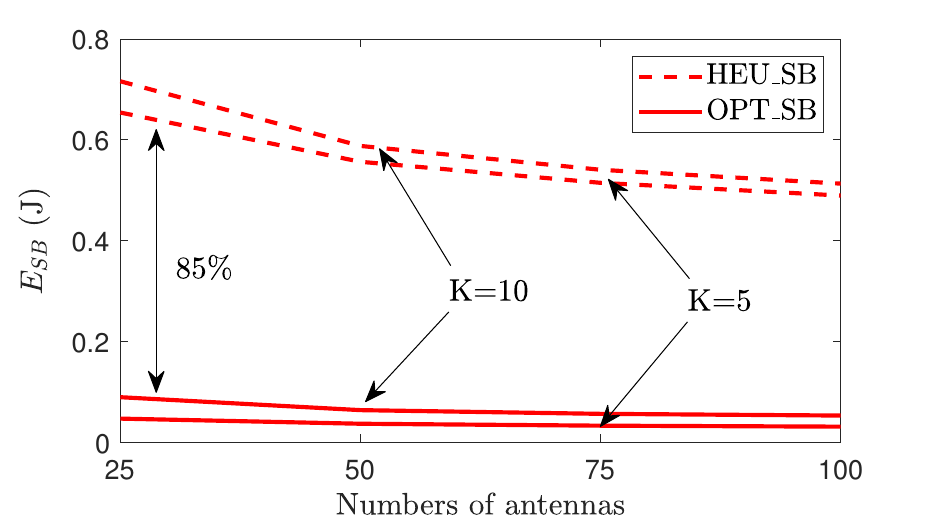}\label{subfig:heu1}}
			\vspace{-2mm}
			\subfigure[]
			{\includegraphics[width=0.5\textwidth]{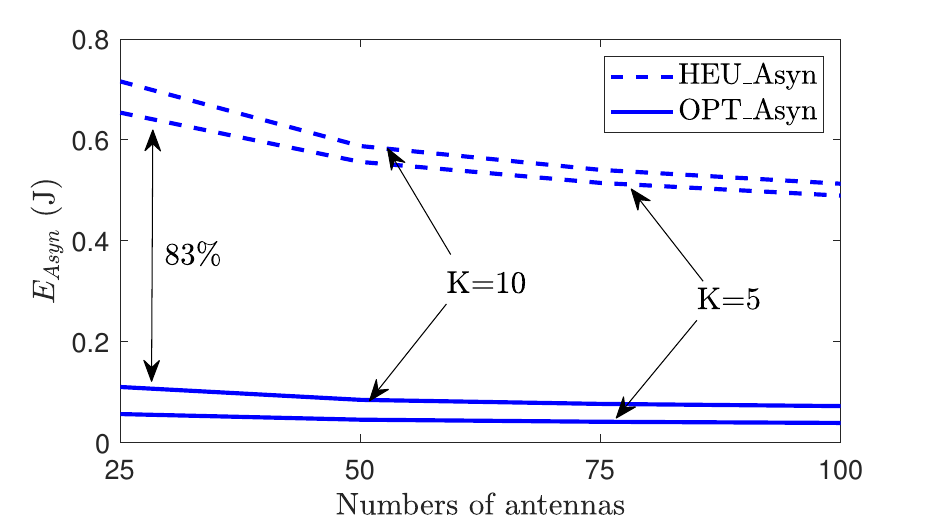}\label{subfig:heu2}}
			\subfigure[]
			{\includegraphics[width=0.5\textwidth]{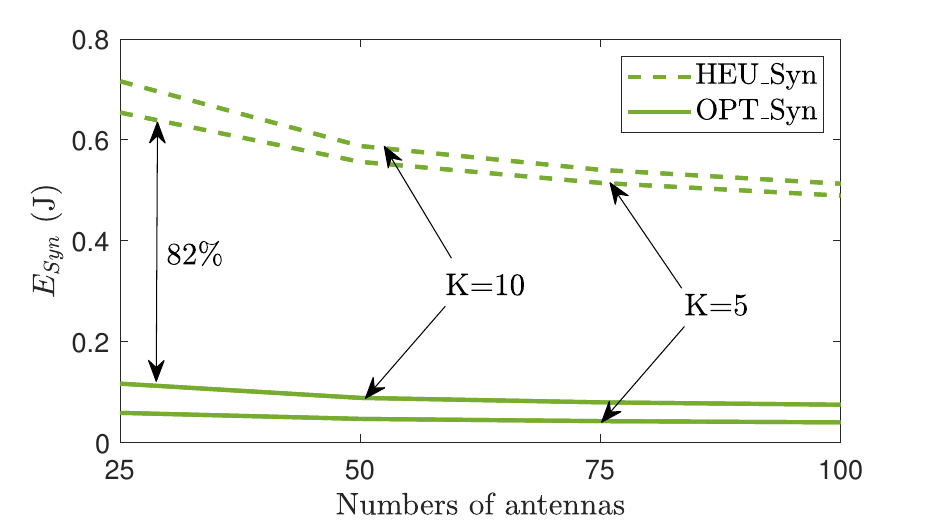}\label{subfig:heu3}}
			\vspace{-4mm}
			\caption{Effectiveness of the proposed schemes with $M=75, K=10$ and $t_{\text{QoS}}=1$ s.}
			\label{Fig:heuvsopt}
		\end{figure}
		
		Fig.~\ref{Fig:heuvsopt} compares the total energy consumption in an FL communication round by all the considered schemes. 
		As seen, our proposed schemes significantly outperform the heuristic schemes. 
		In particular, \textbf{OPT\_SB}, \textbf{OPT\_Asyn}, and \textbf{OPT\_Syn} reduce the total energy consumption by a substantial amount, e.g., by more than $80\%$ in both cases of $K=10$ and $K=5$. 
		These results show the significant advantage of joint optimization of user assignment, data size, time, transmit power, and computing frequencies over the heuristic schemes.
		
		\begin{figure}[t!]
			\centering
			\vspace{-5mm}
			\subfigure[]
			{\includegraphics[width=0.5\textwidth]{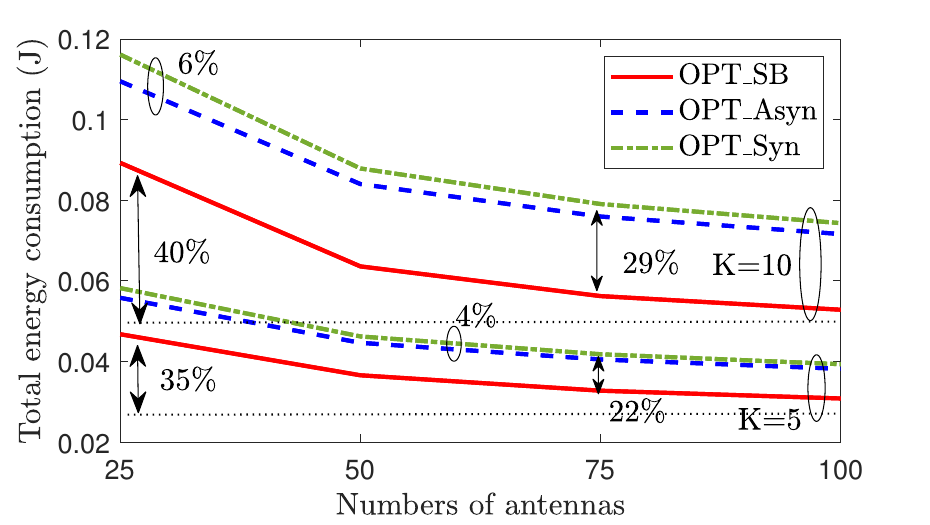}\label{subfig:heu1}\vspace{-0mm}}
			\subfigure[]
			{\includegraphics[width=0.5\textwidth]{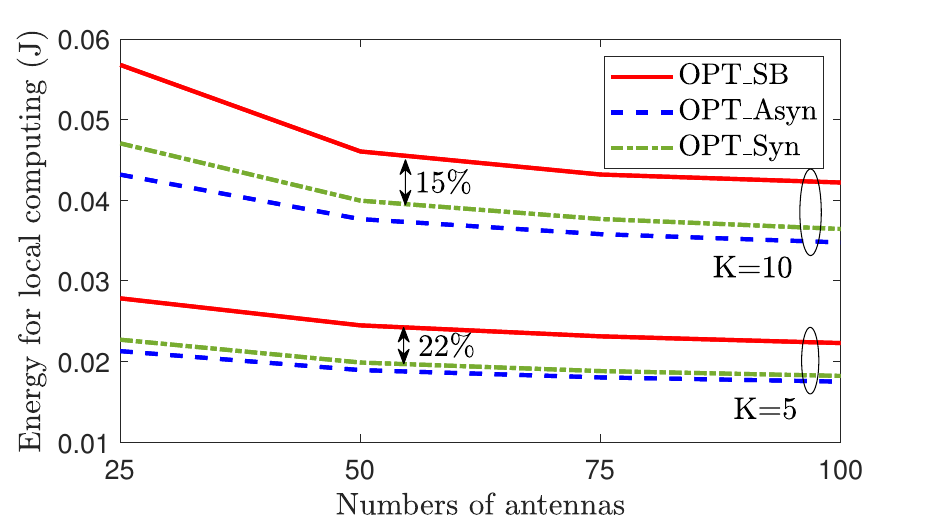}\label{subfig:heu2}}
			\subfigure[]
			{\includegraphics[width=0.5\textwidth]{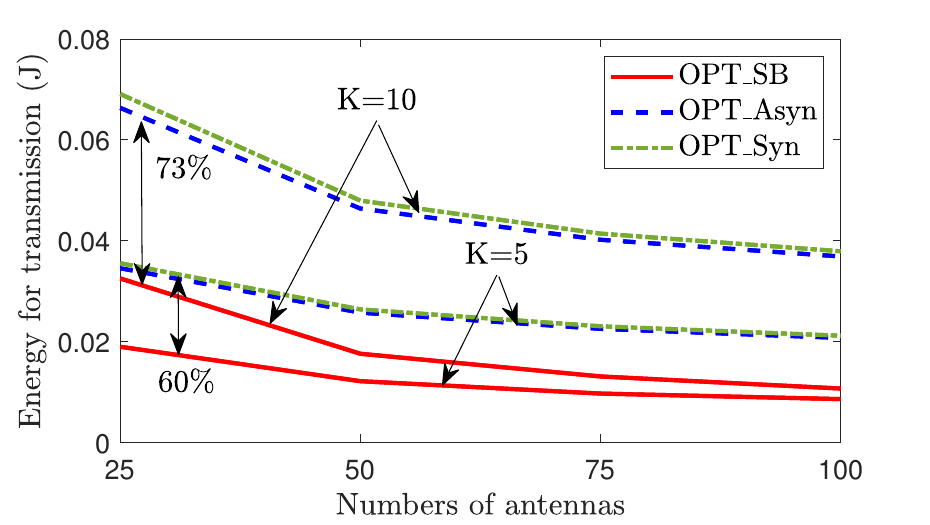}\label{subfig:heu3}}
			\vspace{-4mm}
			\caption{Comparison of the proposed schemes. Here, $t_{\text{QoS}}=1$ s.}
			\label{Fig:Etotal}
			\vspace{+0mm}
		\end{figure}
		
		\subsubsection{Comparison of the Proposed Schemes}
		Fig.~\ref{subfig:heu1} shows that the session-based scheme is the best performer while the synchronous scheme is the worst. Compared to \textbf{OPT\_Syn}, the total energy consumption by \textbf{OPT\_SB} is reduced by up to $29\%$ while that figure for \textbf{OPT\_Asyn} is $6\%$.
		To gain more insights into this result, the total energy consumption for local computing, $E_{C,total}\triangleq \sum_{k\in\K} E_{C,k}(f_k)$, of all considered schemes is shown in Fig.~\ref{subfig:heu2}, and the total energy consumption for transmission $E_{x}-E_{C,total}$ is shown in Fig.~\ref{subfig:heu3}, where $x\in\{SB,Asyn,Syn\}$. 
		First, it can be seen that the energy consumption for local computing and transmission of \textbf{OPT\_Asyn} are both smaller than that of \textbf{OPT\_Syn}. 
		This is so because the UEs in the asynchronous scheme do not wait for other UEs to finish each step. 
		As they have more time available, they can save more energy by using a lower transmit power and a lower computing frequency than the UEs in the synchronous scheme.
		However, the gap between \textbf{OPT\_Asyn} and \textbf{OPT\_Syn} is small because the transmission designs of the asynchronous and synchronous schemes are the same. 
		Here, the session-based scheme uses a more energy-efficient transmission design in which power is not allocated to the UEs who have finished transmission. As a result, compared to the asynchronous and synchronous schemes, the energy consumption for transmission by the session-based scheme is reduced
		by up to $73\%$ as shown in Fig.~\ref{subfig:heu3}. This substantial reduction compensates for the small increase (i.e., $15\%$) in the energy consumption for local computing, making the overall energy consumption by the session-based scheme noticeably lower than that by the asynchronous and synchronous schemes. 
		
		\vspace{-0mm}
		\subsubsection{Impact of the Number of Antennas on the Total Energy Consumption}
		Fig.~\ref{subfig:heu1} also shows that using a large number of antennas corresponds to a reduction of up to $40\%$ in the total energy consumption in one FL communication round. This is because with more antennae, the data rate is higher for the same power level. Thus, the transmission time is shortened, which leads to the reduction in transmission energy; see Fig.~\ref{subfig:heu3}. This also results in more time for local computing, a lower required computing frequency, and then, a reduction in the energy required for local computing as shown in Fig.~\ref{subfig:heu2}. This result shows the importance of massive MIMO technology to support FL.
		
		\subsubsection{Impacts of $t_{\text{QoS}}$ on the Total Energy Consumption of One FL Communication Round}
		
		Fig.~\ref{Fig:tQoS} shows that increasing $t_{\text{QoS}}$
		leads to a dramatic decrease of up to $79\%$ in the total energy consumption. This is because when $t_{\text{QoS}}$ increases, the transmit power and computing frequency required to satisfy the quality-of-service constraint are lower. In turn, they result in a reduction in energy consumption for both transmission and  computing. 
		
		Fig.~\ref{Fig:tQoS} also shows that when increasing $t_{\text{QoS}}$, compared with \textbf{OPT\_Syn}, the energy consumption by \textbf{OPT\_SB} is reduced by
		even more: from $21\%$ for $t_{\text{QoS}}=1$ s to $71\%$ for $t_{\text{QoS}}=4$ s, while the total energy consumption of \textbf{OPT\_Asyn} and \textbf{OPT\_Syn} is almost the same. This result confirms the significant advantage of the session-based transmission design over the conventional transmission designs used in the asynchronous and synchronous schemes.

		\vspace{-0mm}
		\section{Conclusion}
		\vspace{-0mm}
		\label{sec:con}
		In this paper, we  proposed novel synchronous, asynchronous, and  session-based communication designs for massive MIMO networks to support FL. 
		Targeting the minimization of total energy consumption per FL communication round, we formulated design problems 
		that jointly optimize UE assignments, time allocations, transmit powers, and computing frequencies. 
		Relying on successive convex approximation techniques, we  developed novel algorithms to solve the formulated  problems. 
		Numerical results showed that our proposed designs significantly reduced the total energy consumption per FL communication round compared to baseline schemes. In terms of energy savings, the session-based design was the preferred choice to support FL as it outperforms the synchronous and asynchronous designs. 
		For future work, it would be interesting to study the combination of massive MIMO and intelligent reconfigurable surfaces to improve the network coverage as well as taking into account UE selection to improve the energy efficiency of massive MIMO systems to support FL.
		\vspace{-0mm}

		\begin{figure}[t!]
			\centering
			\vspace{-5mm}
			{\includegraphics[width=0.5\textwidth]{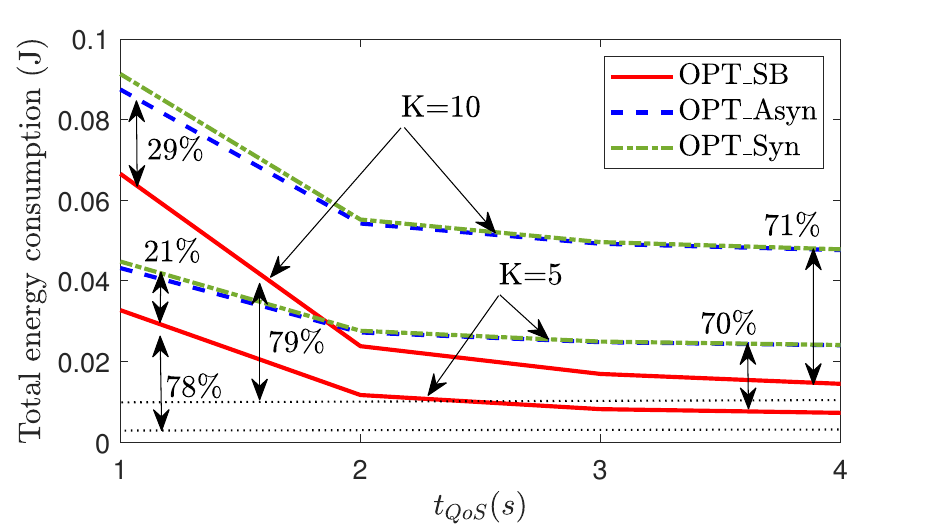}}
			\vspace{-4mm}
			\caption{Impact of $t_{\text{QoS}}$ on the total energy consumption of one FL communication round. Here, $M=75$.}
			\label{Fig:tQoS}
		\end{figure}
		
		\appendix
		\vspace{-1mm}
		Following the arguments in \cite{vu18TCOM,vu18TWC}, let $\widetilde{E}(\lambda)$ 
		be the optimal value 
		at the optimal solution of problem \eqref{Pmain:epi:relax} corresponding to $\lambda$. 
		For ease of presentation, we use $E$ for $\widetilde{E}_{sb}(\f,\vv_d,\vv_u)$. Also, since $(\aaa, \bb, \f, \vv_d, \vv_u, \tilde{\rr}_d, \tilde{\rr}_u,  \Ss_d, \Ss_u, \tilde{\ttt}_d, \tilde{\ttt}_u, \hat{\ttt}_d, \hat{\ttt}_u, \lambda)$ is a subset of variables in $\widetilde{\x}$, we use $\LL(\widetilde{\x},\lambda)$ instead of $\LL(\aaa, \bb, \f, \vv_d, \vv_u, \tilde{\rr}_d, \tilde{\rr}_u,  \Ss_d, \Ss_u, \tilde{\ttt}_d, \tilde{\ttt}_u, \hat{\ttt}_d, \hat{\ttt}_u, \lambda)$. 
		
		Let $\widetilde{E}^*$ be the optimal value of problem \eqref{Pmain:epi:equiv}. Then $\widetilde{E}^* < + \infty$ since $\FF$ is compact. Due to a duality gap between the optimal value of problem \eqref{Pmain:epi:equiv} and the optimal value of its dual problem, we have
			$\sup_{\lambda\geq 0} \widetilde{E}(\lambda)
			= \sup_{\lambda\geq 0} \min_{\widetilde{\x} \in \widehat{\FF}} \LL(\widetilde{\x}, \lambda)
			\leq \widetilde{E}^* \triangleq \min_{\widetilde{\x} \in \widehat{\FF}} \max_{\lambda \geq 0} \LL(\widetilde{\x}, \lambda),$
		which implies that
		\begin{align}
			\label{Estar}
			\widetilde{E}(\lambda) \leq \widetilde{E}^* < +\infty, \forall \lambda\geq 0. 
		\end{align}
		
		\textbf{(i)}  Let $V_{1,\lambda} \triangleq \sum_{k\in\NN} \sum_{i\in\K} ((a_{k,i})_{\lambda}-(a_{k,i})_{\lambda}^2) +  \sum_{k\in\NN} \sum_{j\in\K} ((b_{k,j})_{\lambda}-(b_{k,j})_{\lambda}^2)$, $V_{2,\lambda} \triangleq \sum_{k\in\K} \sum_{i\in\K}  ((\tilde{r}_{d,k,i})_{\lambda} (\tilde{t}_{d,k,i})_{\lambda} - (S_{d,k,i})_{\lambda})$, $V_{3,\lambda} \triangleq \sum_{k\in\K}\sum_{i\in\K} ((\tilde{r}_{u,k,j})_{\lambda} (\tilde{t}_{u,k,j})_{\lambda} - (S_{u,k,j})_{\lambda})$, $V_{4,\lambda} \triangleq \sum_{k\in\K} \big(t_{\lambda} - \sum_{i\in\K}(\hat{t}_{d,k,i})_{\lambda} - t_{C,k}((f_k)_{\lambda}) - \sum_{i\in\K}(\hat{t}_{u,k,i})_{\lambda}\big)$ be the value of $V_1, V_2, V_3, V_4$ at the values 
		$\f_{\lambda}, \aaa_{\lambda}, \bb_{\lambda}, (\tilde{\rr}_d)_{\lambda}, (\tilde{\rr}_u)_{\lambda},  (\Ss_d)_{\lambda}, (\Ss_u)_{\lambda}, (\tilde{\ttt}_d)_{\lambda}, (\tilde{\ttt}_u)_{\lambda}, (\hat{\ttt}_d)_{\lambda}, \\ (\hat{\ttt}_u)_{\lambda}, t_{\lambda}$ corresponding to $\lambda$. Then $V_{1,\lambda}, V_{2,\lambda}, V_{3,\lambda}, V_{4,\lambda} \geq 0, \forall \lambda$. Let $V_{\lambda} \triangleq \gamma_1 V_{1,\lambda} + \gamma_2 V_{2,\lambda} + \gamma_3 V_{3,\lambda} + \gamma_4 V_{4,\lambda}$. Denote by $E_{\lambda}$ the value of $E$ corresponding to $\lambda$.
		Let $0 \leq \lambda_1 < \lambda_2$.
		Because $\widetilde{E}(\lambda_1)$ and $\widetilde{E}(\lambda_2)$ are the optimal values of \eqref{Pmain:epi:relax} corresponding to $\lambda_1$ and $\lambda_2$, we have
		\begin{align}
			\label{Elambda1}
			&\widetilde{E}(\lambda_1) = E_{\lambda_1} + \lambda_1 V_{\lambda_1} \leq  E_{\lambda_2} + \lambda_1 V_{\lambda_2},
			\\
			\label{Elambda2}
			&\widetilde{E}(\lambda_2) = E_{\lambda_2} + \lambda_2 V_{\lambda_2} \leq  E_{\lambda_1} + \lambda_2 V_{\lambda_1}.
		\end{align}
		The above two inequalities lead to $\lambda_1 V_{\lambda_1} + \lambda_2 V_{\lambda_2} \leq \lambda_1 V_{\lambda_2} + \lambda_2 V_{\lambda_1}$, which means $V_{\lambda_2} \leq V_{\lambda_1}$. We conclude that $V_{\lambda}$ is decreasing and bounded below by $0$ as $\lambda$ is increasing, i.e.,
			$V_{\lambda} \rightarrow V^* \geq 0,\,\, \text{as}\,\, \lambda \rightarrow \infty.$
		On the other hand, \eqref{Elambda1} and \eqref{Elambda2} also give
			$\lambda_2 E_{\lambda_1} + \lambda_1 E_{\lambda_2} \leq \lambda_2 E_{\lambda_2} + \lambda_1 E_{\lambda_1},$
		which means $E_{\lambda_2} \geq E_{\lambda_1}$. Thus, $E_{\lambda}$ is increasing and hence bounded below as $\lambda \rightarrow + \infty$. 
		From these observations, if $V^* > 0$, then $\widetilde{E}(\lambda) = E_{\lambda} + \lambda V_{\lambda} \rightarrow + \infty$ as $\lambda \rightarrow + \infty$. This result contradicts \eqref{Estar}, and hence $V^* = 0$. Now, since $V_{\lambda}\rightarrow 0$ as $\lambda \rightarrow + \infty$ and $V_{1,\lambda},V_{2,\lambda},V_{3,\lambda},V_{4,\lambda} \geq 0$, we must have  $V_{1,\lambda} \rightarrow 0, V_{2,\lambda} \rightarrow 0, V_{3,\lambda}\rightarrow 0, V_{4,\lambda}\rightarrow 0$ as $\lambda \rightarrow + \infty$.
		
		\textbf{(ii)}  Denote by $\widetilde{\x}_{\lambda}$ the value of $\widetilde{\x}$ corresponding to $\lambda$. The sequence $\{\widetilde{\x}_{\lambda}\}_{\lambda\geq 0} \subset \widehat{\FF}$ is bounded, and hence, has convergent subsequences. Denote by $\widetilde{\x}_*$ any limit point of $\{\widetilde{\x}_{\lambda}\}_{\lambda}$ as $\lambda \rightarrow + \infty$. Without loss of generality, we assume that $\widetilde{\x}_{\lambda} \rightarrow \widetilde{\x}_*$. Then $\aaa_{\lambda} \rightarrow \aaa_{*}, \bb_{\lambda} \rightarrow \bb_{*}, (\tilde{\rr}_d)_{\lambda} \rightarrow (\tilde{\rr}_d)_{*}, (\tilde{\ttt}_d)_{\lambda} \rightarrow (\tilde{\ttt}_d)_{*}, (\Ss_d)_{\lambda} \rightarrow (\Ss_d)_{*}, (\tilde{\rr}_u)_{\lambda} \rightarrow (\tilde{\rr}_u)_{*}, (\tilde{\ttt}_u)_{\lambda} \rightarrow (\tilde{\ttt}_u)_{*}, (\Ss_u)_{\lambda} \rightarrow (\Ss_u)_{*}, \f_{\lambda} \rightarrow \f_{*}, (\vv_d)_{\lambda} \rightarrow (\vv_d)_{*}, (\vv_u)_{\lambda} \rightarrow (\vv_u)_{*}, (\hat{\ttt}_d)_{\lambda} \rightarrow (\hat{\ttt}_d)_{*}, (\hat{\ttt}_u)_{\lambda} \!\rightarrow\! (\hat{\ttt}_u)_{*}, t_{\lambda} \rightarrow t_{*}$. 
		Thus, $V_{1,\lambda} \rightarrow (V_{1})_* \!\triangleq\! \sum_{k\in\NN} \sum_{i\in\K} ((a_{k,i})_{*}-(a_{k,i})_{*}^2) +  \sum_{k\in\NN} \sum_{j\in\K} ((b_{k,j})_{*}-(b_{k,j})_{*}^2)$, $V_{2,\lambda} \rightarrow (V_{2})_* \!\triangleq\! \sum_{k\in\K} \sum_{i\in\K} ((\tilde{r}_{d,k,i})_{*} (\tilde{t}_{d,k,i})_{*} - (S_{d,k,i})_{*})$, $V_{3,\lambda} \rightarrow (V_{3})_* \triangleq \sum_{k\in\K}\sum_{i\in\K} ((\tilde{r}_{u,k,j})_{*} (\tilde{t}_{u,k,j})_{*} - (S_{u,k,j})_{*}), V_{4,\lambda} \rightarrow (V_{4})_{*} \triangleq \sum_{k\in\K} \big(t_{*} - \sum_{i\in\K}(\hat{t}_{d,k,i})_{*} - t_{C,k}((f_k)_{*}) - \sum_{i\in\K}(\hat{t}_{u,k,i})_{*}\big),
		V_{\lambda} \rightarrow V_* \triangleq \gamma_1 (V_1)_* + \gamma_2 (V_2)_* + \gamma_3 (V_3)_* + \gamma_4 (V_4)_*, E_{\lambda}\! \rightarrow E_{*} \triangleq \widetilde{E}_{sb}(\f_*,(\vv_d)_*,(\vv_u)_*)$. 
		From the results in \textbf{(i)} above, we have $(V_{1})_* = 0, (V_{2})_* = 0, (V_{3})_* = 0, (V_4)_* = 0$, and $V_* = 0$. Thus, $(\aaa_{*}, \bb_{*})$, $((\tilde{\rr}_d)_{*}, (\tilde{\ttt}_d)_{*}, (\Ss_d)_{*}))$, $((\tilde{\rr}_u)_{*}, (\tilde{\ttt}_u)_{*}, (\Ss_u)_{*}))$, and $(\f_*, (\hat{\ttt}_d)_*, (\hat{\ttt}_u)_*, t_*)$ satisfy \eqref{sumab}--\eqref{V4}, respectively.
		Moreover, since  $\widetilde{\x}_{\lambda} \in \widehat{\FF}, \forall \lambda \geq 0$, then $\widetilde{\x}_{*} \in \widehat{\FF}$, which means $\widetilde{\x}_{*} \in \FF$.
		Therefore, $\widetilde{\x}_{*}$ is a feasible point of problem \eqref{Pmain:epi:equiv}, and hence, $E_* \geq \widetilde{E}^*$. By the definition of $\widetilde{E}(\lambda)$, it holds that
			$\sup_{\lambda\geq 0} \widetilde{E}(\lambda) \geq \widetilde{E}(\lambda) = E_{\lambda} + \lambda V_{\lambda} \geq E_{\lambda}, \forall \lambda \geq 0.$
		Letting $\lambda \rightarrow + \infty$ gives
		\begin{align}
			\label{Estar2}
			\sup_{\lambda\geq 0} \widetilde{E}(\lambda) \geq E_* \geq \widetilde{E}^*.
		\end{align}
		From \eqref{Estar} and \eqref{Estar2}, it is true that $\sup_{\lambda\geq 0} \widetilde{E}(\lambda) = E_* = \widetilde{E}^*$, which proves \eqref{Strong:Dualitly:hold}. This implies that $\widetilde{\x}_{*}$ is an optimal solution of \eqref{Pmain:epi:equiv} and the proof is completed. 
		\vspace{-0mm}
		


		\ifCLASSOPTIONcaptionsoff
		\newpage
		\fi

		
		\begin{spacing}{0.99}
			\bibliographystyle{IEEEtran}
			\bibliography{IEEEabrv,newidea2021}
		\end{spacing}
		
		\begin{IEEEbiography}[{\includegraphics*[width=1in, height=1.25in, clip, keepaspectratio]{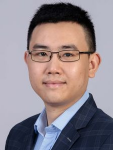}}]
			{Tung Thanh Vu} (Member, IEEE) received the B.Sc. degree (Hons.) in telecommunications and networking from the Ho Chi Minh City University of Science, Vietnam, in 2012, the M.Sc. degree in telecommunication engineering from the Ho Chi Minh City University of Technology, Vietnam, in 2016, and the Ph.D. degree in wireless communications from The University of Newcastle, Australia, in 2021. In 2019, he visited the Broadband Communications Research Lab, McGill University. From February 2021 to September 2022, he was a Research Fellow at Queen's University Belfast, U.K. 
			
			He is currently a Postdoctoral Researcher at the Department of Electrical Engineering (ISY), Linköping University, Sweden. His research interests include optimization, information theories, and machine learning applications for 5G-and-beyond wireless networks, especially with massive MIMO, cell-free massive MIMO, federated learning, full-duplex communications, physical layer security, and low-earth orbit satellite communications. 
			
			Dr. Tung Thanh Vu is serving as an Editor of Elsevier Physical Communication (PHYCOM). 
			He has also served as a member of the technical program committee and the symposium/session chairs in a number of IEEE international conferences such as GLOBECOM, ICCE, ATC. 
			He was an IEEE Wireless Communications Letters exemplary reviewer for 2020 and 2021, an IEEE Transactions on Communications exemplary
			reviewer for 2021. He received the Best Poster Award at AMSI Optimise Conference in 2018. 
		\end{IEEEbiography}
		
		\begin{IEEEbiography}[{\includegraphics*[width=1in, height=1.25in, clip, keepaspectratio]{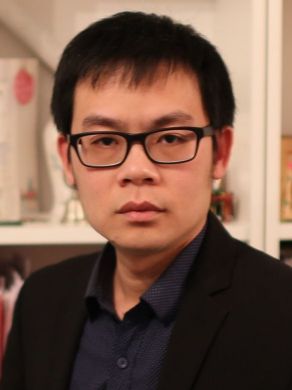}}]
			{Hien Quoc Ngo} (Senior Member, IEEE)  received the B.S. degree in electrical engineering from the Ho Chi Minh City University of Technology, Vietnam, in 2007, the M.S. degree in electronics and radio engineering from Kyung Hee University, South Korea, in 2010, and the Ph.D. degree in communication systems from Link\"oping University (LiU), Sweden, in 2015. In 2014, he visited the Nokia Bell Labs, Murray Hill, New Jersey, USA. From January 2016 to April 2017, Hien Quoc Ngo was a VR researcher at the Department of Electrical Engineering (ISY), LiU. He was also a Visiting Research Fellow at the School of Electronics, Electrical Engineering and Computer Science, Queen's University Belfast, UK, funded by the Swedish Research Council.
			
			Hien Quoc Ngo is currently a Reader (Associate Professor) at Queen's University Belfast, UK. His main research interests include massive (large-scale) MIMO systems, cell-free massive MIMO, physical layer security, and cooperative communications. He has co-authored many research papers in wireless communications and co-authored the Cambridge University Press textbook \emph{Fundamentals of Massive MIMO} (2016).
			
			Dr. Hien Quoc Ngo received the IEEE ComSoc Stephen O. Rice Prize in Communications Theory in 2015, the IEEE ComSoc Leonard G. Abraham Prize in 2017, and the Best PhD Award from EURASIP in 2018. He also received the IEEE Sweden VT-COM-IT Joint Chapter Best Student Journal Paper Award in 2015. He was an \emph{IEEE Communications Letters} exemplary reviewer for 2014, an \emph{IEEE Transactions on Communications} exemplary reviewer for 2015, and an \emph{IEEE Wireless Communications Letters} exemplary reviewer for 2016.  He was awarded the UKRI Future Leaders Fellowship in 2019.
			Dr. Hien Quoc Ngo currently serves as an Editor for the IEEE Transactions on Wireless Communications, the IEEE Wireless Communications Letters, Digital Signal Processing, Elsevier Physical Communication (PHYCOM). He was a Guest Editor of IET Communications, special issue on ``Recent Advances on 5G Communications'' and a Guest Editor of  IEEE Access, special issue on ``Modelling, Analysis, and Design of 5G Ultra-Dense Networks'', in 2017. He has been a member of Technical Program Committees for many IEEE conferences such as ICC, GLOBECOM, WCNC, and VTC.
		\end{IEEEbiography}
		
		\begin{IEEEbiography}[{\includegraphics*[width=1in, height=1.25in, clip, keepaspectratio]{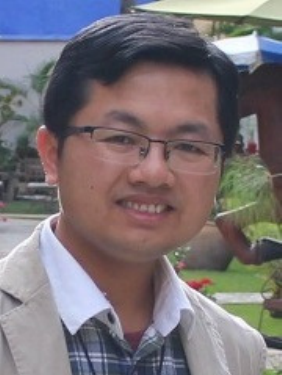}}]
			{Minh Ngoc Dao}  received the B.Sc. (Hons.) and M.Sc. degrees in mathematics from Hanoi National University of Education, Vietnam in 2004 and 2006, respectively, and the Ph.D. degree in applied mathematics from the University of Toulouse, France in 2014. 
			
			Dr. Dao is currently a Senior Lecturer with RMIT University, Australia. Prior to this, he was a Lecturer with Hanoi National University of Education, Vietnam, a Postdoctoral Fellow with The University of British Columbia, Canada, a Research Associate with The University of Newcastle and The University of New South Wales, Australia, and a Lecturer with Federation University Australia. His research interests include mathematical optimization, convex and variational analysis, control theory, signal processing, and machine learning. He received the Annual Best Paper Award from the Journal of Global Optimization in 2017.
		\end{IEEEbiography}
		
		\begin{IEEEbiography}[{\includegraphics*[width=1in, height=1.25in, clip, keepaspectratio]{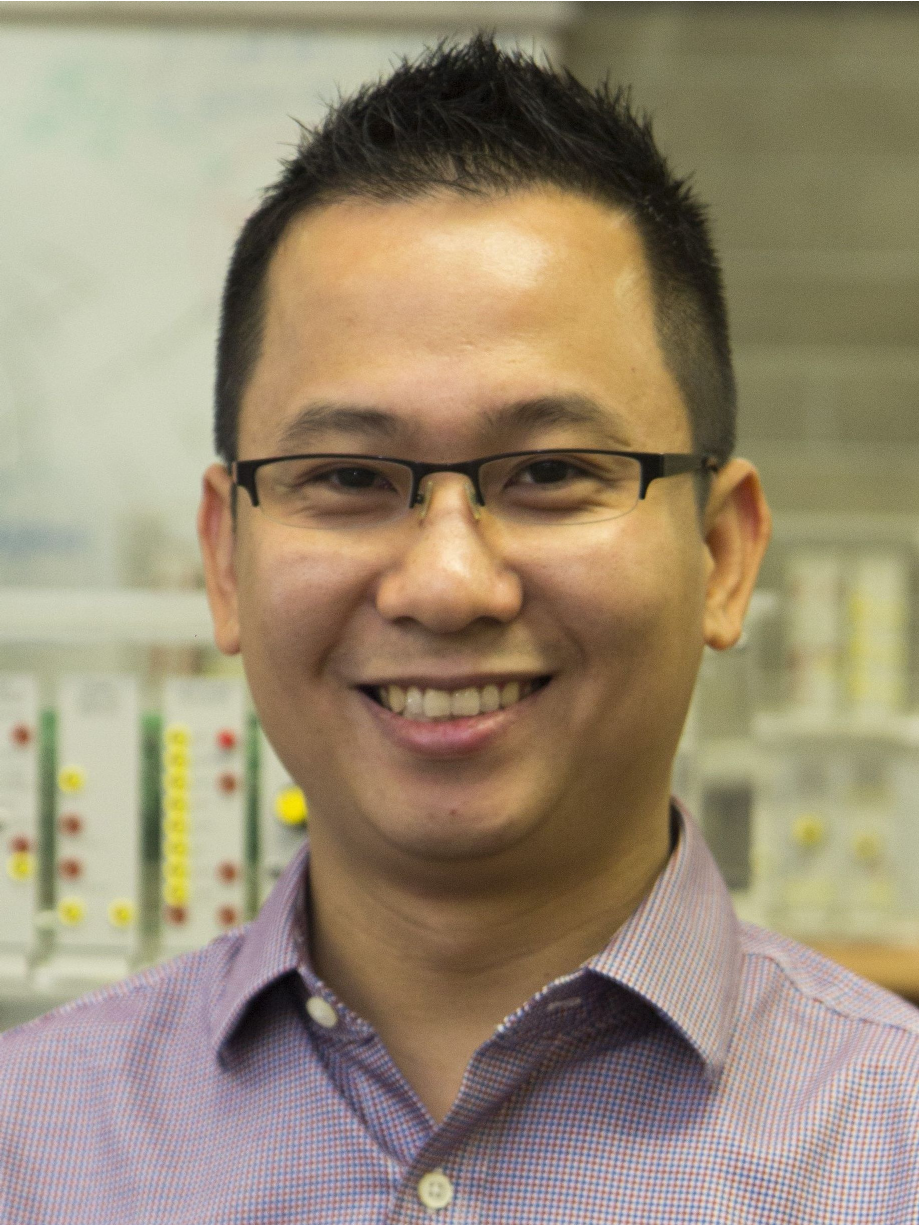}}]
			{Duy Trong Ngo} (Senior Member, IEEE) received the B.Eng. (with First-class Honours and the University Medal) degree in telecommunication engineering from The University of New South Wales in 2007, the M.Sc. degree in electrical engineering (communication) from the University of Alberta in 2009, and the Ph.D. degree in electrical engineering from McGill University in 2013.
			
			Since 2013, Dr. Ngo has been with the University of Newcastle, Australia where he is an Associate Professor in the School of Engineering. His research interests include wireless communications, intelligent transportation systems, and machine learning applications. His research has been funded by Ericsson AB Sweden, the Australian Research Council (Discovery Project, Linkage Project), CRC iMOVE and foreign governments. He was an Associate Editor of the IEEE Wireless Communications Letters during 2020-21.
			
			Dr. Ngo received the NICTA Telecommunications Excellence Award, the University Medal of The University of New South Wales, the Post-Doctoral Fellowships by the Natural Sciences and Engineering Research Council of Canada and the Fonds de recherche du Qu\'{e}bec--Nature et technologies, the Vice-Chancellor's Award for Research and Innovation Excellence, the Pro Vice-Chancellor's Award for Research Excellence, and the Pro Vice-Chancellor's Award for Teaching Excellence.
		\end{IEEEbiography}
		
		\begin{IEEEbiography}[{\includegraphics*[width=1in, height=1.25in, clip, keepaspectratio]{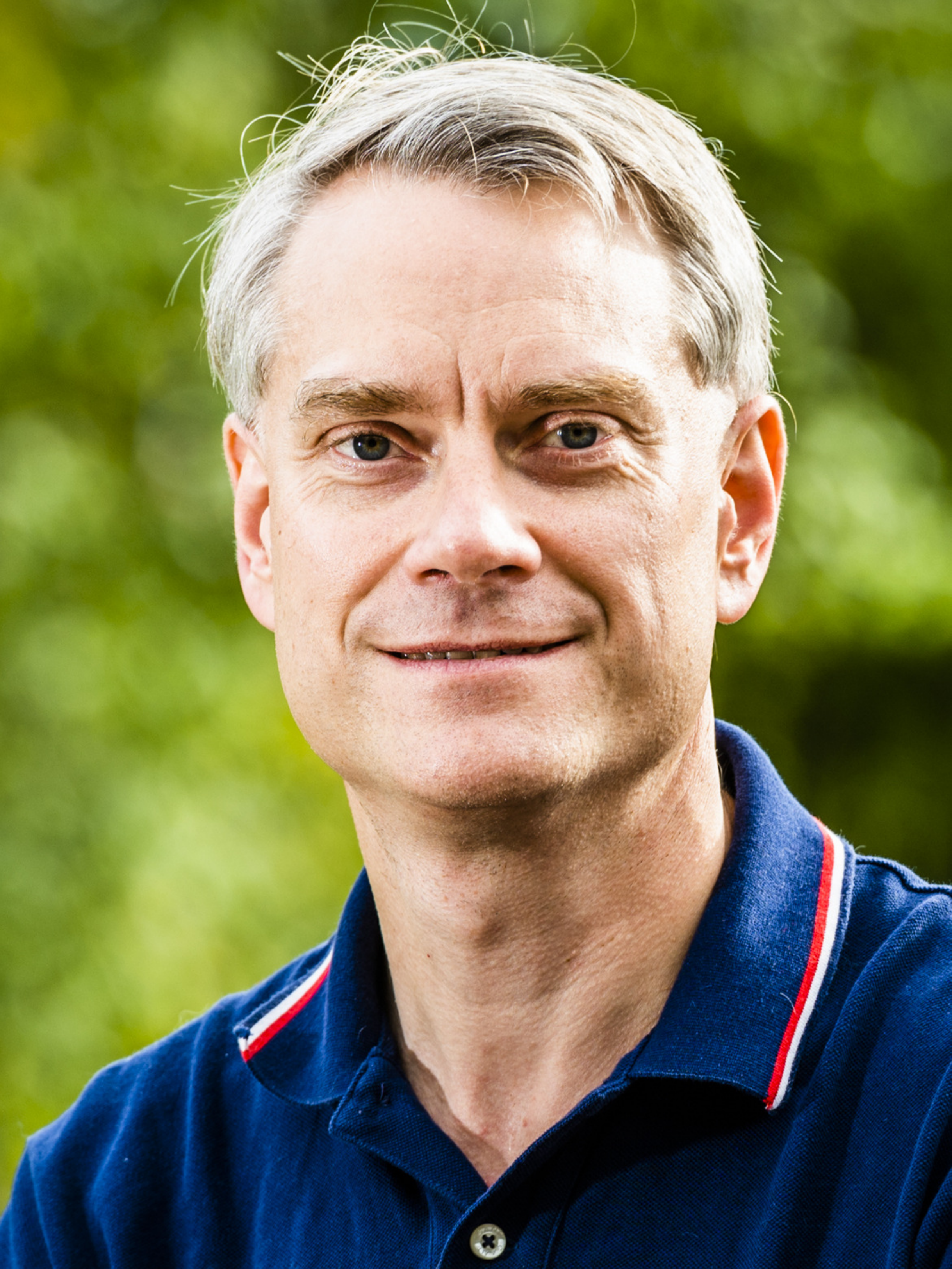}}]
			{Erik G. Larsson} (Fellow, IEEE) received the Ph.D. degree from Uppsala University, Uppsala, Sweden, in 2002.  He is currently Professor of Communication Systems at Link\"oping University (LiU) in Link\"oping, Sweden. He was with the KTH Royal Institute of Technology in Stockholm, Sweden, the George Washington University, USA, the University of Florida, USA, and Ericsson Research, Sweden.  His main professional interests are within the areas of wireless communications and signal processing. He co-authored \emph{Space-Time Block Coding for  Wireless Communications} (Cambridge University Press, 2003) and \emph{Fundamentals of Massive MIMO} (Cambridge University Press, 2016). 
			
			He served as chair  of the IEEE Signal Processing Society SPCOM technical committee (2015--2016), chair of  the \emph{IEEE Wireless  Communications Letters} steering committee (2014--2015), member of the \emph{IEEE Transactions on Wireless Communications} steering committee (2019-2022), General and Technical Chair of the Asilomar SSC conference (2015, 2012), technical co-chair of the IEEE Communication Theory Workshop (2019), and member of the  IEEE Signal Processing Society Awards Board (2017--2019). He was Associate Editor for, among others, the \emph{IEEE Transactions on Communications} (2010-2014), the \emph{IEEE Transactions on Signal Processing} (2006-2010), and the \emph{IEEE Signal  Processing Magazine} (2018-2022).
			
			He received the IEEE Signal Processing Magazine Best Column Award
			twice, in 2012 and 2014, the IEEE ComSoc Stephen O. Rice Prize in
			Communications Theory in 2015, the IEEE ComSoc Leonard G. Abraham
			Prize in 2017, the IEEE ComSoc Best Tutorial Paper Award in 2018, and
			the IEEE ComSoc Fred W. Ellersick Prize in 2019.
		\end{IEEEbiography}
		
		\begin{IEEEbiography}[{\includegraphics*[width=1in, height=1.25in, clip, keepaspectratio]{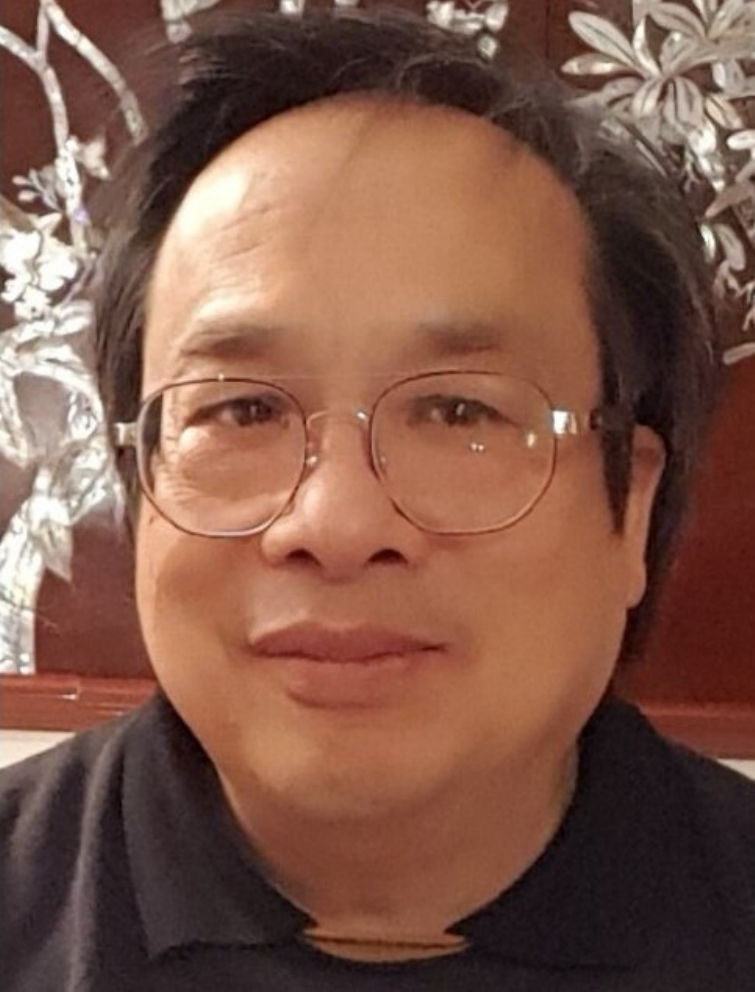}}]
			{Tho Le-Ngoc} (Life Fellow, IEEE) received the B.Eng. degree in electrical engineering, in 1976, the M.Eng. degree in microprocessor applications, in 1978, from McGill University, Montreal, and the Ph.D. degree in digital communications, in 1983, from the University of Ottawa, Canada. From 1977 to 1982, he was with Spar Aerospace Ltd., Sainte-Anne-de-Bellevue, QC, Canada, involved in the development and design of satellite communications systems. From 1982 to 1985, he was with SRTelecom Inc., Saint-Laurent, QC, Canada, where he developed the new point-to-multipoint DA-TDMA/TDM Subscriber Radio System SR500. From 1985 to 2000, he was a Professor with the Department of Electrical and Computer Engineering, Concordia University, Montreal. Since 2000, he has been with the Department of Electrical and Computer Engineering, McGill University. His research interest includes broadband digital communications. 
			
			He is a Distinguished James McGill Professor, and a Fellow of the Engineering Institute of Canada, the Canadian Academy of Engineering, and the Royal Society of Canada. He was a recipient of the 2004 Canadian Award in Telecommunications Research and the IEEE Canada Fessenden Award, in 2005.
		\end{IEEEbiography}

	\end{document}